      \theoremstyle{plain}
\newcommand{\bM}{{\boldsymbol{M}}}
\newcommand{\bL}{{\boldsymbol{L}}}
\newcommand{\bS}{{\boldsymbol{S}}}
\newcommand{\bZ}{{\boldsymbol{Z}}}
\newcommand{\bB}{{\boldsymbol{B}}}
\newcommand{\bT}{{\boldsymbol{\Theta}}}
\newcommand{\bU}{{\boldsymbol{U}}}
\newcommand{\bV}{{\boldsymbol{V}}}
\newcommand{\bSi}{{\boldsymbol{\Sigma}}}
\newcommand{\bX}{{\boldsymbol{X}}}
\newcommand{\bx}{{\boldsymbol{x}}}
\newcommand{\bu}{{\boldsymbol{u}}}
\newcommand{\bv}{{\boldsymbol{v}}}
\newcommand{\bz}{{\boldsymbol{z}}}
\newcommand{\by}{\boldsymbol{y}}
\newcommand{\bdel}{\boldsymbol{\delta}}
\newcommand{\be}{\boldsymbol{e}}
\newcommand{\bhx}{{\boldsymbol{\widehat{x}}}}
\newcommand{\g}{\textsl{g}}
\newcommand{\bg}{\textbf{\textsl{g}}}
\renewcommand{\@algocf@capt@plain}{above}
\DeclareMathOperator*{\argmin}{arg\,min}
\newtheorem{theorem}{Theorem}[section]
\newtheorem{proposition}[theorem]{Proposition}
\begin{document}
\title{Incorporating Prior Information in Compressive Online Robust Principal Component Analysis}

\author{Huynh Van Luong,~Nikos Deligiannis,~J\"{u}rgen Seiler,~S{\o}ren Forchhammer,~and~Andr\'{e} Kaup

\thanks{H. V. Luong, J. Seiler, and A. Kaup are with the Chair of Multimedia Communications and Signal Processing, Friedrich-Alexander-Universit\"{a}t Erlangen-N\"{u}rnberg, 91058 Erlangen, Germany (e-mail: huynh.luong@fau.de, juergen.seiler@fau.de, and andre.kaup@fau.de).}
\thanks{N. Deligiannis is with the Department of Electronics and Informatics, Vrije Universiteit Brussel, 1050 Brussels, and also with iMinds, 9050 Ghent, Belgium (e-mail: ndeligia@etro.vub.ac.be).}
\thanks{S. Forchhammer is with the Department of Photonics Engineering, Technical University of Denmark, 2800 Lyngby, Denmark (e-mail: sofo@fotonik.dtu.dk).}
}

%
\maketitle

\begin{abstract}
	We consider an online version of the robust Principle Component Analysis (PCA), which arises naturally in time-varying source separations such as video foreground-background separation. This paper proposes a compressive online robust PCA with prior information for recursively separating a sequences of frames into sparse and low-rank components from a small set of measurements. In contrast to conventional batch-based PCA, which processes all the frames directly, the proposed method processes measurements taken from each frame. Moreover, this method can efficiently incorporate multiple prior information, namely previous reconstructed frames, to improve the separation and thereafter, update the prior information for the next frame. We utilize multiple prior information by solving $n\text{-}\ell_{1}$ minimization for incorporating the previous sparse components and using incremental singular value decomposition ($\mathrm{SVD}$) for exploiting the previous low-rank components. We also establish theoretical bounds on the number of measurements required to guarantee successful separation under assumptions of static or slowly-changing low-rank components. Using numerical experiments, we evaluate our bounds and the performance of the proposed algorithm. 
	In addition, we apply the proposed algorithm to online video foreground and background separation from compressive measurements. Experimental results show that the proposed method outperforms the existing methods. 
\end{abstract}
\begin{keywords}
	Prior information, compressed sensing, sparse signal, $n$-$\ell_{1}$ minimization, compressive measurements.
\end{keywords}


\section{Introduction}\label{sec:intro}
\IEEEPARstart{P}{rincipal} Component Analysis (PCA) is widely used in statistical data analysis in a number of applications. A modified version of Principal Component Analysis (PCA), namely, robust PCA (RPCA) \cite{CandesRPCA,JWright09,ZHLin09}, is a well-known tool to efficiently capture properties of signals of interest. A potential application of RPCA is video foreground and background separation, e.g., in video surveillance and online tracking brain activation from MRI sequences. The goal of RPCA is to decompose the high-dimensional matrix $\bM$ into the sum of unknown sparse $\bS$ and low-rank $\bL$ components. Principal Component Pursuit (PCP) \cite{CandesRPCA} recovers $\bL$ and $\bS$ by solving the following program:
\begin{equation}\label{PCP}
\min_{\bL,\bS} \|\bL\|_{*}+\lambda\|\bS\|_{1} \mathrm{~subject~to~}\bL+\bS=\bM,
\end{equation}
where $\|\cdot\|_{*}$ is the matrix nuclear norm (sum of singular values). 
For example, in video separation, a video sequence is separated into slowly-changing background as the low-rank term $\bL$ and foreground as the sparse term $\bS$, which is the quantity of interest in this paper. However, RPCA methods \cite{CandesRPCA,JWright09,ZHLin09} need to access all data frames, e.g., all frames of a video, which is batch-based and not practical. 

\textbf{Problem}. We consider a version of RPCA for the recursive separation problem of a time-varying sequence of frames that processes one frame per time instance and works on a small set of measurements to separate large-scale data. In more detail, we introduce a time instance $t$, at which we have observed $\bM_{t}\hspace{-0.2pt}=\hspace{-0.2pt}\bL_{t}\hspace{-0.2pt}+\hspace{-0.2pt}\bS_{t}$, where $\bS_{t}\hspace{-0.2pt}=\hspace{-0.2pt}[\bx_{1}~ \bx_{1}~ ...~ \bx_{t}]$ and $\bL_{t}\hspace{-0.2pt}=\hspace{-0.2pt}[\bv_{1}~ \bv_{2}~ ...~ \bv_{t}]$, where $[\cdot]$ denotes a matrix and $\bx_{t}, \bv_{t}\in \mathbb{R}^{n}$ are a column component in $\bS_{t}$ and $\bL_{t}$, respectively. In this work, we consider compressive measurements of the observed $\bx_{t}+\bv_{t}$ at a time $t$, i.e., we observe $\by_{t}\hspace{-0.2pt}=\hspace{-0.2pt}\mathbf{\Phi}(\bx_{t}\hspace{-0.2pt}+\hspace{-0.2pt}\bv_{t})$, where $\mathbf{\Phi}\hspace{-0.2pt}\in\hspace{-0.2pt} \mathbb{R}^{m\times n} $~$(m\hspace{-0.2pt}<\hspace{-0.2pt}n)$ is a random projection~\cite{CandesTIT06,DonohoCOM06}, and that $\bL_{t-1}$ and $\bS_{t-1}$ have been given for time instance $t\hspace{-0.2pt}-\hspace{-0.2pt}1$. Following Problem \eqref{PCP}, we want to solve a problem at time $t$ as follows
\begin{equation}\label{onlinePCP}
\min_{\bx_{t},\bv_{t}} \Big\|[\bL_{t-1}~\bv_{t}]\Big\|_{*}\hspace{-0.2pt}+\hspace{-0.1pt}\lambda\|\bx_{t}\|_{1} \mathrm{~subject~to~}\by_{t}\hspace{-0.2pt}=\hspace{-0.2pt}\mathbf{\Phi}(\bx_{t}+\bv_{t}),
\end{equation}
where $\bL_{t-1}\hspace{-0.4pt}=\hspace{-0.2pt}[\bv_{1}~ \bv_{2}~ ...~ \bv_{t-1}]$ and $\bS_{t-1}\hspace{-0.4pt}=\hspace{-0.2pt}[\bx_{1}~ \bx_{1}~ ...~ \bx_{t-1}]$ are given. We aim at solving Problem \eqref{onlinePCP} given prior information $\bL_{t-1}$, $\bS_{t-1}$, and $\mathbf{\Phi}$ as formulated in Sec. \ref{formulateCORPCA}.

\textbf{Related Works}. Some previous approaches \cite{LauraGrouse10,JHe12,Feng13,Mansour15}
were proposed for online estimating low-dimensional subspaces from randomly subsampled data. However, they have not taken the sparse component $\bx_{t}$ into account. In addition, compressive ReProCS in \cite{GuoQV14,QiuVLH14} was proposed to recover the sparse component for foreground extraction with compressive measurements. Compressive ReProCS \cite{GuoQV14} considers recovering $\bx_{t}$ from $\by_{t}\hspace{-0.2pt}=\hspace{-0.2pt}\mathbf{\Phi}_{t}(\boldsymbol{A}\bx_{t}\hspace{-0.1pt}+\hspace{-0.1pt}\boldsymbol{B}\bv_{t})$ \cite{GuoQV14}. It can be noted here that $\mathbf{\Phi}_{t}\hspace{-0.2pt}\in \hspace{-0.2pt}\mathbb{R}^{n\times n}$ and $\boldsymbol{A}\hspace{-0.2pt}=\hspace{-0.2pt}\boldsymbol{B }\hspace{-0.2pt}\in \hspace{-0.2pt}\mathbb{R}^{m\times n}$. However, the background $\bv_{t}$ is not recovered after each instance and no explicit condition on how many measurements are required for successful recovery is considered. In this work, we focus on recovering both foreground and background components from compressive measurements. 

Recently, some related approaches to foreground extraction in \cite{MotaTSP16}, \cite{WarnellTIP15} utilizing an adaptive rate of the compressive measurement were introduced, yet they considered the background static. Another work in \cite{Rodriguez16} proposed an incremental PCP method that processes one frame at a time, however it did use information of the full frame, rather than compressive measurements. Compressive PCP \cite{JWright13} considered the separation on compressive measurements as a batch-based method.

Furthermore, the problem of reconstructing a time sequence of sparse signals with prior information is also playing an important role in the context of online RPCA~\cite{QiuV11,GuoQV14,QiuVLH14}. There were several studies on sparse signal recovery with prior information from low-dimensional measurements~\cite{MotaTSP16,QiuV11,GuoQV14,QiuVLH14,VaswaniZ16}. The study in~\cite{VaswaniZ16} provided a comprehensive overview of the domain, reviewing a class of recursive algorithms for recovering a time sequence of sparse signals from a small number of measurements. 
The studies in~\cite{QiuV11,GuoQV14,QiuVLH14} used modified-CS~\cite{NVaswaniTSP10} to leverage prior knowledge under the condition of slowly varying support and signal values. However, all the above can not exploit multiple prior information gleaned from multiple previously recovered frames.

\textbf{Contributions}. We propose a compressive online RPCA with multiple prior information named, CORPCA, which exploits the information for previously recovered sparse components by utilizing RAMSIA \cite{LuongICIP16} and leverages the slowly-changing characteristics of low-rank components via an incremental $\mathrm{SVD}$ \cite{Brand02}. We perform the separation in an online manner by minimizing, (i) $n$-$\ell_{1}$ function \cite{LuongICIP16} for the sparse part; and (ii) the rank of a matrix of the low-rank part. Thereafter, the new recovered sparse and low-rank components are updated as a new prior knowledge for the next processing instance.    
Furthermore, we derive the number of measurements that guarantee successful compressive separation. As such, we set a bound on the number of required measurements based on the current foreground-background and prior information. 

The rest of this paper is organized as follows. Our problem formulation and the proposed algorithm  are presented in Sec. \ref{formulateCORPCA}. We describe performance guarantees of the proposed method including the measurement bound and the proof sketch in Sec. \ref{performGuarantees}. We demonstrate numerical results of the proposed method in Sec. \ref{Experiment}.

\section{Problem Formulation and Algorithm}\label{formulateCORPCA}
In this section, we formulate explicitly our problem (in Sec. \ref{problem}), following by our proposed algorithm in Sec. \ref{corpca}. 
\subsection{Problem Formulation}\label{problem}
The proposed CORPCA algorithm is based on RAMSIA \cite{LuongICIP16}, our previously proposed agorithm that uses $n$-$\ell_1$  minimization with adaptive weights to recover a sparse signal using multiple side information signals.
The objective function of RAMSIA \cite{LuongICIP16} is given by:
\begin{equation}\label{n-l1minimizationGlobal}
\min_{\bx}\hspace{-0.2pt}\Big\{\hspace{-0.2pt}H(\bx)\hspace{-0.2pt}=\hspace{-0.2pt}\frac{1}{2}\|\mathbf{\Phi}\bx-\by\|^{2}_{2} + \lambda \hspace{-0.2pt}\sum\limits_{j=0}^{J}\hspace{-0.2pt}\beta_{j}\|\mathbf{W}_{j}(\bx-\bz_{j})\|_{1}\Big\},
\end{equation}
where $\lambda>0$, $\beta_{j}\hspace{-0.2pt}>\hspace{-0.2pt}0$ are weights among side information signals, and $\mathbf{W}_{j}$ is a diagonal weighting matrix for each side information $\bz_{j}$, $\mathbf{W}_{j}\hspace{-0.2pt}=\hspace{-0.2pt}\mathrm{diag}(w_{j1},w_{j2},...,w_{jn})$, wherein $w_{ji}\hspace{-0.2pt}>\hspace{-0.2pt}0$ is the weight in $\mathbf{W}_{j}$ at index $i$ for a given $\bz_{j}$ with $\bz_{0}\hspace{-0.2pt}=\hspace{-0.2pt}\mathbf{0}$.

The proposed CORPCA method aims at processing one frame per time instance and incorporating prior information for both sparse components and low-rank components. At time instance $t$, we observe $\by_{t}=\mathbf{\Phi}(\bx_{t}+\bv_{t})$ with $\by_{t}\in\mathbb{R}^{ m}$.
Let $\bZ_{t-1}:=\{\bz_{1},...,\bz_{J}\}$, which is a set of $\bz_{j}\in \mathbb{R}^{n}$, and $\bB_{t-1}\in \mathbb{R}^{n\times d}$ denote prior information for $\bx_{t}$ and $\bv_{t}$, respectively. We can form the prior information $\bZ_{t-1}$ and $\bB_{t-1}$ based on the reconstructed $\{ \bx_{1}, ..., \bx_{t-1}\} $ and $\{\bv_{1}, ..., \bv_{t-1}\}$, which is discussed in Sec. \ref{corpca}.

To solve the problem in \eqref{onlinePCP}, we fomulate the objective function of CORPCA as
\begin{align}\label{CORPCAminimization}
\hspace{-0.2pt}\min_{\bx_{t},\bv_{t}}\hspace{-0.2pt}&\Big\{\hspace{-0.2pt}H(\bx_{t},\bv_{t}|\by_{t},\bZ_{t-1},\bB_{t-1})\hspace{-0.2pt}=\hspace{-0.2pt}\frac{1}{2}\|\mathbf{\Phi}(\bx_{t}+\bv_{t})-\by_{t}\|^{2}_{2} \nonumber \\
&+\lambda \mu\hspace{-0.2pt}\sum\limits_{j=0}^{J}\hspace{-0.2pt}\beta_{j}\|\mathbf{W}_{j}(\bx_{t}-\bz_{j})\|_{1}+\mu\Big\|[\bB_{t-1}~ \bv_{t}]\Big\|_{*}\Big\},
\end{align}
where $\mu>0$. It can be seen that when $\bv_{t}$ is static, Problem \eqref{CORPCAminimization} would become Problem \eqref{n-l1minimizationGlobal}. When $\bx_{t}$ and $\bv_{t}$ are batch variables 
without taking the prior information, $\bZ_{t-1}$ and $\bB_{t-1}$, and the projection  $\mathbf{\Phi}$ into account, Problem \eqref{CORPCAminimization} becomes Problem \eqref{PCP}.
\subsection{CORPCA Algorithm} 
\label{corpca}
We now describe how to solve Problem \eqref{CORPCAminimization} and thereafter update the prior information, $\bZ_{t-1}$ and $\bB_{t-1}$, for the next instance. First, we solve Problem \eqref{CORPCAminimization} given $\bZ_{t-1}$ and $\bB_{t-1}$. Let us denote $f(\bv_{t},\bx_{t})=(1/2)\|\mathbf{\Phi}(\bx_{t}+\bv_{t})-\by_{t} \|^{2}_{2}$, $g(\bx_{t})=\lambda \hspace{-0.2pt}\sum_{j=0}^{J}\hspace{-0.2pt}\beta_{j}\|\mathbf{W}_{j}(\bx_{t}-\bz_{j})\|_{1}$, and $\textsl{h}(\bv_{t})=\|[\bB_{t-1}~ \bv_{t}]\|_{*}$. The solution of \eqref{CORPCAminimization} using proximal gradient methods \cite{Beck09,JWright09} gives that $\bx_{t}^{(k+1)}$ and $\bv_{t}^{(k+1)}$ at iteration $k+1$ can be iteratively computed via the soft thresholding operator \cite{Beck09} for $\bx_{t}$ and the single value thresholding operator \cite{Cai10} for $\bv_{t}$:
\begin{equation}
\label{vtProximal}
\hspace{-0.0pt}\bv_{t}^{(\hspace{-1pt}k+1\hspace{-1pt})}\hspace{-4pt}=\hspace{-2pt}\argmin_{\bv_{t}}\hspace{-0.0pt}\Big\{\hspace{-0.2pt}\mu\textsl{h}(\bv_{t})\hspace{-0.1pt}+\hspace{-0.2pt}\Big\|\bv_{t}\hspace{-2pt}-\hspace{-0.2pt}\Big(\bv_{t}^{(\hspace{-0.1pt}k\hspace{-0.1pt})}\hspace{-0.4pt}-\hspace{-2pt}\frac{1}{2}\nabla_{\hspace{-0.3pt}\bv_{t}\hspace{-0.1pt}}f(\hspace{-0.2pt}\bv_{t}^{(\hspace{-0.1pt}k\hspace{-0.1pt})}\hspace{-2pt},\bx_{t}^{(\hspace{-0.1pt}k\hspace{-1pt})}\hspace{-2pt})\hspace{-2pt}\Big)\Big\|_{2}^{2}\hspace{-1pt}\hspace{-0.0pt}\Big\},\\\hspace{-0.7pt}
\end{equation}
\begin{equation}
\label{xtProximal}
\hspace{-0.0pt}\bx_{t}^{(\hspace{-1pt}k\hspace{-1pt}+1)}\hspace{-0.4pt}=\hspace{-2pt}\argmin_{\bx_{t}}\hspace{-0.0pt}\Big\{\hspace{-2pt}\mu g(\bx_{t})\hspace{-0.1pt}+\hspace{-2pt}\Big\|\bx_{t}\hspace{-2pt}-\hspace{-2pt}\Big(\bx_{t}^{(\hspace{-1pt}k\hspace{-1pt})}\hspace{-0.4pt}-\hspace{-0.2pt}\frac{1}{2}\nabla_{\hspace{-0.3pt}\bx_{t}} \hspace{-1pt}f(\hspace{-2pt}\bv_{t}^{(\hspace{-1pt}k\hspace{-1pt})}\hspace{-2pt},\bx_{t}^{(\hspace{-1pt}k\hspace{-0.1pt})}\hspace{-2pt})\hspace{-2pt}\Big)\Big\|_{2}^{2}\hspace{-1pt}\hspace{-0.0pt}\Big\}.\hspace{-0.8pt}
\end{equation}
\setlength{\textfloatsep}{0pt}
\begin{algorithm}[t!]\label{CORPCAAlg}
	\DontPrintSemicolon \SetAlgoLined
	\textbf{Input}: $\by_{t},~\mathbf{\Phi},~\bZ_{t-1},~\bB_{t-1}$;\\
	\textbf{Output}: $\bx_{t},~\bv_{t},~\bZ_{t},~\bB_{t}$;\\
	\tcp{Initialization.}
	$\bx_{t}^{(-1)}\hspace{-0.2pt}=\hspace{-0.2pt}\bx_{t}^{(0)}\hspace{-0.2pt}=\hspace{-0.2pt}\mathbf{0}$;	$\bv_{t}^{(-1)}\hspace{-0.2pt}=\hspace{-0.2pt}\bv_{t}^{(0)}\hspace{-0.2pt}=\hspace{-0.2pt}\mathbf{0}$;
	$\xi_{-1}\hspace{-0.2pt}=\xi_{0}\hspace{-0.2pt}=\hspace{-0.2pt}1$; $\mu_{0}\hspace{-0.2pt}=\hspace{-0.2pt}0$; $\bar{\mu}\hspace{-0.2pt}>\hspace{-0.2pt}0$; $\lambda>0$; $0\hspace{-0.2pt}<\hspace{-0.2pt}\epsilon\hspace{-0.2pt}<\hspace{-0.2pt}1$; $k\hspace{-0.2pt}=\hspace{-0.2pt}0$; $g_{1}(\cdot)\hspace{-0.2pt}=\hspace{-0.2pt}\|\cdot\|_{1}$; \\
	\While{not converged}{
		\tcp{Solving the problem.}
		$\widetilde{\bv_{t}}^{(k)}\hspace{-0.2pt}=\bv_{t}^{(k)}\hspace{-0.2pt}+\hspace{-0.2pt}\frac{\xi_{k-1}-1}{\xi_{k}}(\bv_{t}^{(k)}\hspace{-0.2pt}-\hspace{-0.2pt}\bv_{t}^{(k-1)})$; \\
		$\widetilde{\bx_{t}}^{(k)}\hspace{-0.2pt}=\bx_{t}^{(k)}\hspace{-0.2pt}+\hspace{-0.2pt}\frac{\xi_{k-1}-1}{\xi_{k}}(\bx_{t}^{(k)}\hspace{-0.2pt}-\hspace{-0.2pt}\bx_{t}^{(k-1)})$; \\
		$\nabla_{\bv_{t}} f(\widetilde{\bv_{t}}^{(k)},\widetilde{\bx_{t}}^{(k)})=\nabla_{\bx_{t}}f(\widetilde{\bv_{t}}^{(k)},\widetilde{\bx_{t}}^{(k)})=\mathbf{\Phi}^{\mathrm{T}}\Big(\mathbf{\Phi} (\widetilde{\bv_{t}}^{(k)}+\widetilde{\bx_{t}}^{(k)})-\by_{t}\Big)$; \\	
		
		%
		%

		\hspace{-0.0pt}$(\bU_{t},\bSi_{t},\bV_{t})\hspace{-0.1pt}=\hspace{-0.1pt}\hspace{-0.2pt}\mathrm{incSVD}\Big(\hspace{-0.0pt}\Big[\bB_{t-1}~\Big(\widetilde{\bv_{t}}^{(k)}\hspace{-0.2pt}-\hspace{-0.2pt}\frac{1}{2}\nabla_{\bv_{t}} f(\widetilde{\bv_{t}}^{(k)},\widetilde{\bx_{t}}^{(k)})\Big)\Big]\Big)$; \\
		
		$\bT_{t}\hspace{-0.2pt}=\hspace{-0.2pt}\bU_{t}\mathbf{\Gamma}_{\frac{\mu_{k}}{2}g_{1}}(\bSi_{t})\bV_{t}^{T}$; \\
		$\bv_{t}^{(k+1)}\hspace{-0.2pt}=\bT_{t}(:,\mathrm{end})$; \\
		$\bx_{t}^{(k+1)}\hspace{-0.2pt}=\hspace{-0.2pt}\Gamma_{\frac{\mu_{k}}{2}g}\Big(\widetilde{\bx_{t}}^{(k)}-\frac{1}{2}\nabla_{\bx_{t}} f(\widetilde{\bv_{t}}^{(k)},\widetilde{\bx_{t}}^{(k)})\Big)$;  where $\Gamma_{\frac{\mu_{k}}{2}g}(\cdot)$ is given as in RAMSIA \cite{LuongICIP16};\\
	\hspace{-0.0pt}\tcp{\hspace{-0.6pt}Computing the updated weights \cite{LuongICIP16}\hspace{-0.2pt}.}
		$w_{ji} \hspace{-0.2pt}=\dfrac{n(|x_{ti}^{(k+1)}\hspace{-0.2pt}-\hspace{-0.2pt}z_{ji}|\hspace{-0.2pt}+\hspace{-0.1pt}\epsilon)^{-1}}{\sum\limits_{l=1}^{n}(|x_{tl}^{(k+1)}\hspace{-0.2pt}-\hspace{-0.2pt}z_{jl}|\hspace{-0.2pt}+\hspace{-0.1pt}\epsilon)^{-1}}$;
		\\
		$\beta_{j}\hspace{-0.2pt} =\dfrac{\Big(||\mathbf{W}_{j}(\bx_{t}^{(k+1)}\hspace{-0.2pt}-\hspace{-0.2pt}\bz_{j}^{(k+1)})||_{1}\hspace{-0.2pt}+\hspace{-0.1pt}\epsilon\Big)^{-1}}{\sum\limits_{l=0}^{J}\hspace{-0.2pt}\Big(||\mathbf{W}_{l}(\bx_{t}^{(k+1)}\hspace{-0.2pt}-\hspace{-0.2pt}\bz_{l}^{(k+1)})||_{1}\hspace{-0.2pt}+\hspace{-0.1pt}\epsilon\Big)^{-1}}$;
		
		$\xi_{k+1}=(1+\sqrt{1+4\xi_{k}^{2}})/2$; $\mu_{k+1}=\max(\epsilon\mu_{k},\bar{\mu})$\\
		$k=k+1$; \\
	}
	\tcp{Updating prior information.}
	$\bZ_{t}:=\{\bz_{j}=\bx^{(k+1)}_{t-J+j}\}_{j=1}^{J}$;\\
	$\bB_{t}=\bU_{t}(:,1:d)\mathbf{\Gamma}_{\frac{\mu_{k}}{2}g_{1}}(\bSi_{t})(1:d,1:d)\bV_{t}(:,1:d)^{\mathrm{T}}$;\\
	\tcp{Output.}	
	\Return $\bx_{t}^{(k+1)},~\bv_{t}^{(k+1)},~\bZ_{t},~\bB_{t}$;
	\caption{The proposed CORPCA algorithm.}
\end{algorithm}
	
The proposed CORPCA algorithm is described in Algorithm \ref{CORPCAAlg}, where as suggested in \cite{JWright09}, the convergence of Algorithm \ref{CORPCAAlg} is determined by $\|\partial H(\bx_{t},\bv_{t})|_{\bx_{t}^{(k+1)},\bv_{t}^{(k+1)}}\|_{2}^{2}\hspace{-0.2pt}<\hspace{-0.2pt}2*10^{-7}\|(\bx_{t}^{(k+1)},\bv_{t}^{(k+1)})\|_{2}^{2}$ and the proximal operator $\mathbf{\Gamma}_{\tau g_{1}}(\cdot)$ is defined by:
\begin{equation}\label{l1-proximalOperatorMatrix}
\mathbf{\Gamma}_{\tau g_{1}}(\bX) = \argmin_{\bV }\Big\{ \tau g_{1}(\bV) + \frac{1}{2}||\bV-\bX||^{2}_{2}\Big\}.
\end{equation}

\textbf{Prior Information Update}. Updating $\bZ_{t}$ and $\bB_{t}$ is carried out after each time instance so as to be used for subsequent processing. Because of the correlated characteristics of recursive frames, we can simply update prior information $\bZ_{t}$ by the $J$ latest previous $\bx_{t}$, i.e., $\bZ_{t}:=\{\bz_{j}=\bx_{t-J+j}\}_{j=1}^{J}$. For prior information $\bB_{t}\in \mathbb{R}^{n\times d}$, we may have an adaptive update, where we can adjust the value $d$ or keep a constant number $d$ of the columns of $\bB_{t}$. For each time instance, we use incremental singular decomposition $\mathrm{SVD}$ \cite{Brand02}, namely, $\mathrm{incSVD}(\cdot)$ in Algorithm \ref{CORPCAAlg}. When we update $\bB_{t}=\bU_{t}\mathbf{\Gamma}_{\frac{\mu_{k}}{2}g_{1}}(\bSi_{t})\bV_{t}^{\mathrm{T}}$, the dimension of $\bB_{t}$ is increased, where $\bB_{t}\in \mathbb{R}^{n\times (d+1)}$, after each instance. However, we need to stay at a reasonable constant number of $d$, thus we update $\bB_{t}=\bU_{t}(:,1:d)\mathbf{\Gamma}_{\frac{\mu_{k}}{2}g_{1}}(\bSi_{t})(1:d,1:d)\bV_{t}(:,1:d)^{\mathrm{T}}$. The advantage of using $\mathrm{incSVD}$ is that the computational cost of $\mathrm{incSVD}(\cdot)$ is lower than conventional SVD \cite{Brand02,Rodriguez16} since we only compute the full $\mathrm{SVD}$ of the middle matrix with size $(d+1)\times(d+1)$, where $d\ll n$, instead of $n\times(d+1)$. The computation of $\mathrm{incSVD}(\cdot)$ is elaborated below.  

Our goal is to compute $\mathrm{incSVD}[\bB_{t-1} ~\bv_{t}]$, i.e., $[\bB_{t-1} ~\bv_{t}]=\bU_{t}\bSi_{t}\bV_{t}^\mathrm{T}$. Given matrix $\bB_{t-1}\in \mathbb{R}^{n\times d}$, we take the $\mathrm{SVD}$ of $\bB_{t-1}$ to obtain $\bB_{t-1}=\bU_{t-1}\bSi_{t-1}\bV_{t-1}^\mathrm{T}$. Therefore, we can derive $(\bU_{t},\bSi_{t},\bV_{t})$ via $(\bU_{t-1},\bSi_{t-1},\bV_{t-1})$ and $\bv_{t}$. We recast the matrix of $[\bB_{t-1} ~\bv_{t}]$ by
\begin{equation}\label{incSVD}
[\bB_{t-1} ~\bv_{t}]\hspace{-0.2pt}=\hspace{-0.2pt}[\bU_{t-1}~\dfrac{\bdel_{t}}{\|\bdel_{t}\|_{2}}]\hspace{-0.2pt}\cdot\hspace{-0.2pt}\left[\hspace{-0.2pt}\begin{array}{cc} \bSi_{t-1}\hspace{-0.2pt} &\hspace{-0.2pt} \be_{t}\\ \textbf{0}^\mathrm{T}\hspace{-0.2pt} & \hspace{-0.2pt}\|\bdel_{t}\|_{2} \end{array}\hspace{-0.2pt}\right]
\hspace{-0.2pt}\cdot\hspace{-0.2pt}\left[\hspace{-0.2pt}\begin{array}{cc} \bV_{t-1}^\mathrm{T} \hspace{-0.2pt}&\hspace{-0.2pt} \textbf{0}\\ \textbf{0}^\mathrm{T} \hspace{-0.2pt}&\hspace{-0.2pt} 1 \end{array}\hspace{-0.2pt}\right],
\end{equation}
where $\be_{t}=\bU_{t-1}^\mathrm{T}\bv_{t}$ and $\bdel_{t} =\bv_{t}-\bU_{t-1}\be_{t}$.
By taking $\mathrm{SVD}$ of the matrix in between the right side of \eqref{incSVD}, we yield $\left[\begin{array}{cc} \bSi_{t-1}\hspace{-0.2pt} &\hspace{-0.2pt} \be_{t}\\ \textbf{0}^\mathrm{T} \hspace{-0.2pt}& \hspace{-0.2pt}\|\bdel_{t}\|_{2} \end{array}\right]= \widetilde{\bU}\widetilde{\bSi}\widetilde{\bV}^\mathrm{T}$.
Eventually, we obtain $\bU_{t}=[\bU_{t-1}~\dfrac{\bdel_{t}}{\|\bdel_{t}\|_{2}}]\cdot\widetilde{\bU}\label{incSVDFinalU}$, $\bSi_{t}\hspace{-0.2pt}=\hspace{-0.2pt}\widetilde{\bSi}$, and $\bV_{t}=\left[\begin{array}{cc} \bV_{t-1}^\mathrm{T}\hspace{-0.2pt} &\hspace{-0.2pt} \textbf{0}\\ \textbf{0}^\mathrm{T}\hspace{-0.2pt} &\hspace{-0.2pt} 1 \end{array}\right]\cdot\widetilde{\bV}$.
\section{Performance Guarantees}\label{performGuarantees}



A bound is presented for the minimum number of measurements required for successful separation by solving the problem \eqref{CORPCAminimization}. This will also serve as a bound for the
proposed CORPCA method. We assume that $\bv_{t}$ is either fixed or slowly-changing
incurring an measurement error that is bounded by $\sigma$, i.e., $\|\mathbf{\Phi}(\hat{\bv}_{t}\hspace{-0.2pt}-\hspace{-0.2pt}\bv_{t})\|_{2}\hspace{-0.2pt}<\hspace{-0.2pt}\sigma$, where $\hat{\bv}_{t}$ is any recovered background. Let $s_0$ denote the support of the source $\bx$ and $s_{j}$ denote the support of each difference vector $\bx\hspace{-0.1pt}-\hspace{-0.1pt}\bz_{j}$; namely,~$\|\bx\hspace{-0.1pt}-\hspace{-0.1pt}\bz_{j}\|_0\hspace{-0.2pt}=\hspace{-0.2pt}s_{j}$, where $j\in\{0,\dots, J\}$. We shall derive the bound for recovering foreground $\bx_{t}$ given either fixed $\bv_{t}$ or the error bound $\sigma$.
\begin{theorem}\label{RAMSIABound}
	The problem in \eqref{CORPCAminimization} requires $m_{t}$ measurements to successfully separate $\bx_{t}\in \mathbb{R}^{n}$ with the sparse support $s_{0}$ and $\bv_{t}\in \mathbb{R}^{n}$ given compressive measurements $\by_{t}=\mathbf{\Phi}(\bx_{t}+\bv_{t})$, where  $\mathbf{\Phi}\in \mathbb{R}^{m_{t}\times n}$, whose
	elements are randomly drawn from an i.i.d. Gaussian distribution, and prior information including the prior $\bZ_{t-1}:=\{\bz_{j}\}_{j=1}^{J}$, where $\|\bx_{t}-\bz_{j}\|_0=s_{j}$ with $\bz_{j}\in \mathbb{R}^{n}$, and the prior $\bB_{t-1}\in \mathbb{R}^{n\times d}$.
	\begin{enumerate}
		\item 
		If $\bv_{t}$ is not changing in time, we can recover $\bx_{t}$ successfully from \eqref{CORPCAminimization} provided
		\begin{equation}\label{boundNoiseless}
		m_{t}\geq 2\alpha\log\Big(\dfrac{n}{\sum_{j=0}^{J}\beta_{j}s_{j}}\Big)\hspace{-0.1pt}+\hspace{-0.1pt}	\dfrac{7}{5}\Big(\sum_{j=0}^{J}\beta_{j}s_{j}\Big)\hspace{-0.1pt}+1.
		\end{equation}	
		
		\item
		If $\bv_{t}$ is slowly-changing, i.e, $\bB_{t-1}$ and $[\bB_{t-1}~\bv_{t}]$ are low rank. We solve the convex problem in  \eqref{CORPCAminimization} and 
		let $\hat{\bx}_{t}$ and $\hat{\bv}_{t}$ denote its solutions. Supposing that the measurement error of $\bv_{t}$ is bounded as $\|\mathbf{\Phi}(\hat{\bv}_{t}-\bv_{t})\|_{2}<\sigma$, we have that $\|\hat{\bx}_{t}-\bx_{t}\|_{2}\leq\frac{2\sigma}{1-\sqrt{\rho}}$ with $0<\rho<1$ provided 
		\begin{equation}\label{boundNoisy}
		m_{t}\geq 2\dfrac{\alpha}{\rho}\log\Big(\dfrac{n}{\sum_{j=0}^{J}\beta_{j}s_{j}}\Big)\hspace{-0.1pt}+\hspace{-0.1pt}	\dfrac{7}{5\rho}\Big(\sum_{j=0}^{J}\beta_{j}s_{j}\Big)\hspace{-0.1pt}+\dfrac{3}{2\rho},
		\end{equation}
	\end{enumerate}
	where $\alpha=\dfrac{\epsilon^{2}}{\hat{\eta}^{2}}\hspace{-0.2pt}\sum\limits_{j=0}^{J}\hspace{-0.2pt}\beta_{j}\hspace{-0.2pt}\sum\limits_{i=1}^{s_{j}}w_{ji}^{2}$ in both \eqref{boundNoiseless} and \eqref{boundNoisy}, wherein $\epsilon>0$ and $\hat{\eta}=\min\limits_{j}\Big\{n\Big(\sum\limits_{i=1}^{n}\frac{1}{|x_{ti}\hspace{-0.1pt}-\hspace{-0.1pt}z_{ji}|+\epsilon} \Big)^{-1}\Big\}$.
\end{theorem}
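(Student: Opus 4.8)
The plan is to treat Case~1 as a noiseless $n$-$\ell_1$ recovery problem and Case~2 as its noisy counterpart, and in both to invoke the Gaussian-width characterisation of convex recovery. When $\bv_{t}$ is fixed it is known from the prior $\bB_{t-1}$, so the nuclear-norm term in \eqref{CORPCAminimization} decouples and recovering $\bx_{t}$ is equivalent to solving $\min_{\bx}\sum_{j=0}^{J}\beta_{j}\|\mathbf{W}_{j}(\bx-\bz_{j})\|_{1}$ subject to $\mathbf{\Phi}\bx=\mathbf{\Phi}\bx_{t}$, after subtracting $\mathbf{\Phi}\bv_{t}$ from $\by_{t}$. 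First I would recall the standard result that, for an $m\times n$ i.i.d.\ Gaussian matrix $\mathbf{\Phi}$, the true $\bx_{t}$ is the unique minimiser with high probability once $m$ exceeds the squared Gaussian width of the descent cone $\mathcal{D}$ of the regulariser $g(\bx)=\sum_{j}\beta_{j}\|\mathbf{W}_{j}(\bx-\bz_{j})\|_{1}$ at $\bx_{t}$, together with a concentration slack of $+1$. This reduces the whole problem to bounding $w(\mathcal{D})^{2}$.

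Second, I would bound the width through the subdifferential inequality $w(\mathcal{D})^{2}\le\inf_{\tau\ge0}\mathbb{E}\big[\mathrm{dist}(\boldsymbol{g},\tau\,\partial g(\bx_{t}))^{2}\big]$, where $\boldsymbol{g}$ is standard Gaussian. The key object is $\partial g(\bx_{t})$, which by the chain rule is the Minkowski sum $\sum_{j}\beta_{j}\mathbf{W}_{j}\,\partial\|\cdot\|_{1}(\mathbf{W}_{j}(\bx_{t}-\bz_{j}))$; coordinate-wise it fixes the sign $\mathrm{sign}(x_{ti}-z_{ji})$ on the support $s_{j}$ and leaves the free interval $[-\tau\beta_{j}w_{ji},\tau\beta_{j}w_{ji}]$ off it. Splitting the expected squared distance into on- and off-support contributions, the on-support coordinates produce a term proportional to $\tau^{2}\sum_{j}\beta_{j}\sum_{i=1}^{s_{j}}w_{ji}^{2}$ plus the count $\sum_{j}\beta_{j}s_{j}$, whence the factor $\tfrac{7}{5}$ emerges from bounding the relevant Gaussian second moment, while each off-support coordinate contributes $\mathbb{E}\big[\big(|g|-\tau\beta_{j}w_{ji}\big)_{+}^{2}\big]$, which decays and, summed over the $n-\sum_{j}\beta_{j}s_{j}$ such coordinates, yields the logarithmic factor. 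Substituting the weight normalisation that defines $\hat{\eta}$ turns the $\tau^{2}$-coefficient into $\alpha$, and optimising over $\tau$ (equivalently matching the interval half-width to $\sqrt{2\log(n/\sum_{j=0}^{J}\beta_{j}s_{j})}$) produces exactly the $2\alpha\log(n/\sum_{j=0}^{J}\beta_{j}s_{j})$ term, giving \eqref{boundNoiseless}.

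Third, for Case~2 I would absorb the background mismatch into the measurement: writing $\by_{t}-\mathbf{\Phi}\hat{\bv}_{t}=\mathbf{\Phi}\bx_{t}+\mathbf{\Phi}(\bv_{t}-\hat{\bv}_{t})$ exhibits $\bx_{t}$ as measured through $\mathbf{\Phi}$ with additive perturbation of norm at most $\sigma$. Invoking the robust (noisy) version of the same recovery theorem---which certifies stable recovery whenever $m$ exceeds the width bound inflated by $1/\rho$, the factor $\rho\in(0,1)$ playing the role of a restricted-eigenvalue constant---yields both the stability estimate $\|\hat{\bx}_{t}-\bx_{t}\|_{2}\le 2\sigma/(1-\sqrt{\rho})$ and the count \eqref{boundNoisy}; the additive $\tfrac{3}{2\rho}$ replaces the noiseless $+1$ and is the same concentration slack rescaled by $1/\rho$.

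The hard part will be the width computation in the second step: handling the Minkowski-sum structure of $\partial g(\bx_{t})$ from the $J+1$ distinct weighted $\ell_{1}$ terms (the per-$j$ supports $s_{j}$ overlap in general, so each coordinate-wise subdifferential is itself a sum of intervals), and then carrying the optimisation over $\tau$ through the Gaussian tail integrals precisely enough to recover the stated constants $2$, $\tfrac{7}{5}$, and the exact form of $\alpha$ rather than mere order-of-magnitude bounds.
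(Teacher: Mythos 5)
Your overall architecture matches the paper's: reduce to recovery of $\bx_t$ under the composite regulariser $g(\bx)=\sum_{j=0}^{J}\beta_j\|\mathbf{W}_j(\bx-\bz_j)\|_1$, invoke the Gaussian-width/descent-cone recovery guarantee (the paper's Proposition \ref{propUpper}), bound the width by $\min_{\tau\ge 0}\mathbb{E}_{\bg}[\mathrm{dist}^2(\bg,\tau\cdot\partial g(\bx_t))]$, pick $\tau=(\epsilon/\hat{\eta})\sqrt{2\log(n/\bar{s})}$ with $\bar{s}=\sum_{j}\beta_j s_j$, and pass to the noisy case via the robust version of the same proposition with the $1/\rho$ inflation and $3/(2\rho)$ slack. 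However, the step you yourself flag as ``the hard part'' --- computing the distance to the Minkowski sum $\sum_j\beta_j\partial g_j(\bx_t)$ when the supports overlap --- is precisely where the paper needs a dedicated device, and your proposal leaves it unresolved. Your coordinate-wise description (fixed sign on the support, free interval off it) is valid only for a single $j$: in the sum, a coordinate lying in the support of $\bx_t-\bz_j$ but outside that of $\bx_t-\bz_{j'}$ contributes an interval centred at a nonzero offset, so the clean per-$j$ on/off-support split you then integrate (yielding $\tau^2\sum_j\beta_j\sum_{i\le s_j}w_{ji}^2$, the count $\bar{s}$, and per-$j$ tail terms) does not follow from the direct computation. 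The paper bypasses this with the inequality
\begin{equation*}
\mathrm{dist}^{2}(\bg,\tau\cdot\partial g(\bx))\;\le\;\sum_{j=0}^{J}\beta_j\,\mathrm{dist}^{2}(\bg,\tau\cdot\partial g_j(\bx)),
\end{equation*}
which holds because every element of $\partial g$ has the form $\sum_j\beta_j\bu_j$ with $\bu_j\in\partial g_j$, $\sum_j\beta_j=1$, and $\|\cdot\|_2^2$ is convex (Jensen). This reduces everything to $J+1$ independent, standard weighted-$\ell_1$ distance computations, and it is what makes the exact constants in \eqref{boundNoiseless} and \eqref{boundNoisy} reachable. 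Without this lemma, or a worked-out substitute for the overlapping-support geometry, your plan does not arrive at the stated bound.

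A secondary inaccuracy: the factor $7/5$ does not come from ``bounding the relevant Gaussian second moment'' of the on-support coordinates. In the paper it is $1+2/5$: the count $\bar{s}$ plus the off-support tail, which after the choice of $\tau$ equals $\bar{s}(1-\bar{s}/n)/\sqrt{\pi\log(n/\bar{s})}$ and is bounded by $\tfrac{2}{5}\bar{s}$ via the elementary inequality $(1-x^{-1})/\sqrt{\pi\log x}\le 2/5$; the logarithmic term, conversely, enters through the on-support $\tau^2$ coefficient, not through the off-support sum. Getting these mechanics right is what produces $2\alpha$ and $7/5$ rather than order-of-magnitude constants.
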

\begin{proof}\label{RAMSIABoundProof}
We firstly derive the bound in \eqref{boundNoiseless} of Theorem \ref{RAMSIABound} for which the low-rank component $\bv_{t}$ is not changed and the remaining bound is a noisy case when the low-rank component $\bv_{t}$ is slowly-changing. We drive the bounds based on Proposition \ref{propUpper} (see Appendix \ref{propositionAppendix}) by first computing the subdifferential $\partial g(\bx)$ and then the distance, $\mathrm{dist}(\cdot)$, between the standard normal vector $\bg$ and $\partial g(\bx)$. The $\bu \in \partial g(\bx)$ of $g(\bx)$ is derived through the separate components of the sum $g(\bx)\hspace{-0.2pt}=\hspace{-0.2pt}\sum_{j=0}^{J}\beta_{j}g_{j}(\bx)$, where $g_{j}(\bx)\hspace{-0.2pt}=\hspace{-0.2pt}\|\mathbf{W}_{j}(\bx-\bz_{j})\|_{1}$. As a result, $\partial g(\bx)=\sum_{j=0}^{J}\beta_{j}\partial g_{j}(\bx)$.

Considering the distance from the standard normal vector $\bg$ to the subdifferential $\partial g(\bx)$ given by \eqref{euclideanDistance}, we derive a relation between this distance and all separate distances of $g_{j}(\bx)$ as
\begin{equation}\label{distanceToSeparateDistance}
\mathrm{dist}^{2}\hspace{-0.2pt}(\bg,\tau\hspace{-0.2pt} \cdot\hspace{-0.2pt} \partial g(\bx))\hspace{-0.2pt}
\leq \sum_{j=0}^{J}\beta_{j}\mathrm{dist}^{2}\hspace{-0.2pt}(\bg,\tau\hspace{-0.2pt} \cdot\hspace{-0.2pt}\partial g_{j}(\bx)),
\end{equation}
where $\sum_{j=0}^{J}\beta_{j}\hspace{-0.2pt}=\hspace{-0.2pt}1$. Taking the expectation of \eqref{distanceToSeparateDistance} delivers
\begin{equation}\label{distanceToSubdifferentialExpectationIS}
\mathbb{E}_{\bg}[\mathrm{dist}^{2}\hspace{-0.1pt}(\bg,\tau\hspace{-0.1pt} \cdot\hspace{-0.1pt} \partial g(\bx))]\hspace{-0.1pt}\leq
\sum_{j=0}^{J}\beta_{j}\mathbb{E}_{\bg}[\mathrm{dist}^{2}\hspace{-0.1pt}(\bg,\tau\hspace{-0.1pt} \cdot\hspace{-0.1pt}\partial g_{j}(\bx))]\hspace{-0.1pt}.
\end{equation}

\subsubsection{Distance Expectation}
We now consider each component $\mathbb{E}_{\bg}[\mathrm{dist}^{2}\hspace{-0.2pt}(\bg,\tau\hspace{-0.1pt} \cdot\hspace{-0.1pt}\partial g_{j}(\bx))]$, where $g_{j}(\bx)\hspace{-0.2pt}=\hspace{-0.2pt}\|\mathbf{W}_{j}(\bx\hspace{-0.2pt}-\hspace{-0.2pt}\bz_{j})\|_{1}$. We firstly consider $\bx\hspace{-0.1pt}-\hspace{-0.1pt}\bz_{j}$. Without loss of generality, we have assumed that the sparse vector $\bx\hspace{-0.2pt}-\hspace{-0.2pt}\bz_{j}$ has $s_{j}$ supports or $\|\bx-\bz_{j}\|_{0}\hspace{-0.2pt}=\hspace{-0.2pt}s_{j}$, i.e, taking the form $\bx\hspace{-0.2pt}-\hspace{-0.2pt}\bz_{j}\hspace{-0.2pt} =\hspace{-0.2pt} (x_{1}\hspace{-0.2pt}-\hspace{-0.2pt}z_{j1},...,x_{s_{j}}\hspace{-0.2pt}-\hspace{-0.2pt}z_{js_{j}},0,...,0)$.
We have the subdifferential $\bu \hspace{-0.2pt}\in \hspace{-0.2pt}\partial g_{j}(\bx)$ of $g_{j}(\bx)$ given by:
\begin{equation}\label{wl1Subdifferential}
\left\{
\begin{array}{l}
u_{i} ~~= w_{ji}\mathrm{sign}(x_{i}\hspace{-0.2pt}-\hspace{-0.2pt}z_{ji}),~~~~~~i\hspace{-0.2pt}=\hspace{-0.2pt}1,...,s_{j} \\
|u_{i}| \leq w_{ji},~~~~~~~~~~~~~~~~~~~~~~i\hspace{-0.2pt}=\hspace{-0.2pt}s_{j}\hspace{-0.2pt}+\hspace{-0.2pt}1,...,n.
\end{array}
\right.
\end{equation}

Let us recall the weights as proposed in RAMSIA \cite{LuongICIP16} for a specific $\bz_{j}$ with $w_{ji}\hspace{-0.2pt}=\hspace{-0.2pt}\eta_{j}/(|x_{i}\hspace{-0.2pt}-\hspace{-0.2pt}z_{ji}|\hspace{-0.2pt}+\epsilon)$ and the constraint $\sum_{i=1}^{n}w_{ji}\hspace{-0.2pt}=\hspace{-0.2pt}n$. Taking the source $\bx\hspace{-0.2pt}-\hspace{-0.2pt}\bz_{j}$ to derive
\begin{equation}\label{wl1BoundWeights}
w_{ji}\hspace{-0.2pt}=\hspace{-0.2pt}\left\{
\begin{array}{l}
\frac{\eta_{j}}{|x_{i}\hspace{-0.1pt}-\hspace{-0.1pt}z_{ji}|\hspace{0pt}+\epsilon},~~~~~~i\hspace{-0.2pt}=\hspace{-0.2pt}1,...,s_{j} \\
\frac{\eta_{j}}{\epsilon},~~~~~~~~~~~~~~i\hspace{-0.2pt}=\hspace{-0.2pt}s_{j}\hspace{-0.2pt}+\hspace{-0.2pt}1,...,n.
\end{array}
\right.
\end{equation}
From the constraint $\sum_{i=1}^{n}w_{ji}\hspace{-0.2pt}=\hspace{-0.2pt}n$, we can derive: $\eta_{j} =n\Big(\sum\limits_{i=1}^{n}\frac{1}{|x_{i}\hspace{0pt}-\hspace{0pt}z_{ji}|+\epsilon} \Big)^{-1}$.

We can then compute the distance from the standard normal vector $\bg$ to the subdifferential $\partial g_{j}(\bx)$ based on \eqref{euclideanDistance} as
\begin{equation}\label{distanceToSubdifferential}
\begin{split}
\mathrm{dist}^{2}(\bg,\tau \cdot \partial g_{j}(\bx))\hspace{-0.2pt}&= \hspace{-0.5pt}\sum\limits_{i=1}^{s_{j}}\Big(\g_{i}\hspace{-0.2pt}-\tau w_{ji}\mathrm{sign}(x_{i}\hspace{-0.2pt}-\hspace{-0.2pt}z_{ji})\Big)^{2}\hspace{-0.2pt}\\
&+\hspace{-0.2pt}\hspace{-0.5pt}\sum\limits_{i=s_{j}+1}^{n}\hspace{-0.5pt}\Big(\max (|\g_{i}|-\tau w_{ji},0)\Big)^{2},
\end{split}
\end{equation}
where $\max(a,0)$ returns the maximum value between $a\hspace{-0.2pt} \in \hspace{-0.2pt}\mathbb{R}$ and $0$.
Further, after taking the expectation of \eqref{distanceToSubdifferential}, we obtain 
\begin{equation}\label{distanceToSubdifferentialExpectation}
\begin{split}
\mathbb{E}_{\bg}[\mathrm{dist}^{2}(\bg,&\tau \cdot \partial g_{j}(\bx))]=s_{j}\hspace{-0.2pt}+\tau^{2} \hspace{-0.2pt}\sum\limits_{i=1}^{s_{j}}w_{ji}^{2}
\\
&+\sqrt{\frac{2}{\pi}}\hspace{-0.2pt}\sum\limits_{i=s_{j}+1}^{n}\int_{\tau w_{ji}}^{\infty}\hspace{-0.2pt}(v-\tau w_{ji})\mathrm{e}^{-v^{2}\hspace{-0.2pt}/2}\mathrm{d}v.
\end{split}
\end{equation}
We use the inequality \eqref{positiveA} for the third term in \eqref{distanceToSubdifferentialExpectation} to get
\begin{equation}\label{distanceToSubdifferentialExpectationLE}
\mathbb{E}_{\bg}[\mathrm{dist}^{2}(\bg,\tau \cdot \partial g_{j}(\bx))]\hspace{-0.2pt}\leq \hspace{-0.2pt}s_{j}\hspace{-0.1pt}+\tau^{2} \hspace{-0.2pt}\sum\limits_{i=1}^{s_{j}}\hspace{-0.1pt}w_{ji}^{2}
\hspace{-0.1pt}+\hspace{-0.1pt}2\hspace{-0.6pt}\sum\limits_{i=s_{j}+1}^{n}\hspace{-0.6pt}\frac{\psi(\tau w_{ji})}{\tau w_{ji}}.
\end{equation}
\subsubsection{Bound Derivation}
Inserting \eqref{distanceToSubdifferentialExpectationLE} in \eqref{distanceToSubdifferentialExpectationIS} for all functions $g_{j}(\bx)$ gives
\begin{equation}\label{distanceToSubdifferentialExpectationLEAll}
\begin{split}
\mathbb{E}_{\bg}[\mathrm{dist}^{2}&(\bg,\tau \cdot \partial g(\bx))]\leq \sum\limits_{j=0}^{J}\beta_{j}s_{j}\hspace{-0.2pt}+\tau^{2} \hspace{-0.2pt}\sum\limits_{j=0}^{J}\beta_{j}\sum\limits_{i=1}^{s_{j}}w_{ji}^{2}
\hspace{-0.2pt}\\
&+\hspace{-0.2pt}2\sum\limits_{j=0}^{J}\hspace{-0.2pt}\beta_{j}(n-s_{j})\frac{\psi(\tau \eta_{j}/\epsilon)}{\tau \eta_{j}/\epsilon},
\end{split}
\end{equation}
where $w_{ji}\hspace{-0.2pt}=\hspace{-0.2pt}\eta_{j}/\epsilon$ in \eqref{wl1BoundWeights} are used in the third term of \eqref{distanceToSubdifferentialExpectationLEAll}. Let us denote $\widehat{\eta}=\min\limits_{j}\{\eta_{j}\}$. It can be noted that the function $\psi(x)/x$ decreases with $x\hspace{-0.2pt}>\hspace{-0.2pt}0$. Thus $\frac{\psi(\tau \eta_{j}/\epsilon)}{\tau \eta_{j}/\epsilon}\leq \frac{\psi(\tau \widehat{\eta}/\epsilon)}{\tau \widehat{\eta}/\epsilon}$. Considering the last term in \eqref{distanceToSubdifferentialExpectationLEAll}, we derive
\begin{equation}\label{distanceToSubdifferentialExpectationLEAllMin}
\begin{split}
\mathbb{E}_{\bg}[\mathrm{dist}^{2}(\bg,\tau \cdot \partial g(\bx))]&\leq \sum\limits_{j=0}^{J}\beta_{j}s_{j}\hspace{-0.2pt}+\tau^{2} \hspace{-0.2pt}\sum\limits_{j=0}^{J}\beta_{j}\sum\limits_{i=1}^{s_{j}}w_{ji}^{2}
\hspace{-0.2pt}\\
&+\hspace{-0.2pt}2\Big(n-\sum\limits_{j=0}^{J}\hspace{-0.2pt}\beta_{j}s_{j}\Big)\frac{\psi(\tau \widehat{\eta}/\epsilon)}{\tau \widehat{\eta}/\epsilon}.
\end{split}
\end{equation}
Inequality \eqref{distanceToSubdifferentialExpectationLEAllMin} holds due to $\sum_{j=0}^{J}\hspace{-0.2pt}\beta_{j}(n-s_{j})=n-\sum_{j=0}^{J}\hspace{-0.2pt}\beta_{j}s_{j}$.

We denote $\bar{s}\hspace{-0.2pt}=\hspace{-0.2pt}\sum_{j=0}^{J}\beta_{j}s_{j}$ and from \eqref{upperBoundCompute}, \eqref{distanceToSubdifferentialExpectationLEAllMin}, we have
\begin{equation}\label{upperBoundComputeWL1}
U_{g}\hspace{-0.3pt}= \min_{\tau\geq 0}\Big\{\bar{s}\hspace{-0.1pt}+\hspace{-0.2pt}\tau^{2} \hspace{-0.2pt}\sum\limits_{j=0}^{J}\hspace{-0.2pt}\beta_{j}\hspace{-0.2pt}\sum\limits_{i=1}^{s_{j}}w_{ji}^{2}
\hspace{-0.2pt}
+\hspace{-0.2pt}2(n\hspace{-0.2pt}-\hspace{-0.2pt}\bar{s})\frac{\psi(\tau \widehat{\eta}/\epsilon)}{\tau \widehat{\eta}/\epsilon}\Big\}.
\end{equation}
Inserting \eqref{densityDistribution} in \eqref{upperBoundComputeWL1} gives
\begin{equation}\label{upperBoundComputeWL1ApplyAEta}
\hspace{-0.1pt}U_{g}\hspace{-0.3pt}\leq \hspace{-0.2pt}\min_{\tau\geq 0}\hspace{-0.2pt}\Big\{\hspace{-0.2pt}\bar{s}\hspace{-0.1pt}+\hspace{0pt}\tau^{2} \hspace{-0.2pt}\sum\limits_{j=0}^{J}\hspace{-0.2pt}\beta_{j}\hspace{-0.2pt}\sum\limits_{i=1}^{s_{j}}\hspace{-0.1pt}w_{ji}^{2}
\hspace{-0.1pt}
+\hspace{-0.1pt}(n\hspace{-0.1pt}-\hspace{-0.1pt}\bar{s})\frac{2}{\sqrt{2\pi}}\frac{e^{(-\tau^{2}\widehat{\eta}^{2}/2\epsilon^{2})}}{\tau\widehat{\eta}/\epsilon}\hspace{-0.2pt}\Big\}.    
\end{equation}

To give a bound as a function of a given source $\bx$ and other parameters, we can select a parameter $\tau\hspace{-0.2pt}>\hspace{-0.2pt}0$ to obtain an useful bound in \eqref{upperBoundComputeWL1ApplyAEta}. Setting $\tau\hspace{-0.2pt}=\hspace{-0.2pt}(\epsilon/\widehat{\eta})\sqrt{2\log(n/\bar{s})}$ gives
\begin{equation}\label{upperBoundComputeWL1ApplyAEtaSet}
U_{g}\hspace{-0.3pt}\leq \bar{s}+2\log(\frac{n}{\bar{s}})\frac{\epsilon^{2}}{\widehat{\eta}^{2}}\hspace{-0.2pt}\sum\limits_{j=0}^{J}\hspace{-0.2pt}\beta_{j}\hspace{-0.2pt}\sum\limits_{i=1}^{s_{j}}w_{ji}^{2}
+\frac{\bar{s}(1-\bar{s}/n)}{\sqrt{\pi\log(n/\bar{s})}}.     \\
\end{equation}
We denote $\alpha\hspace{-0.1pt}=\hspace{-0.1pt}\frac{\epsilon^{2}}{\widehat{\eta}^{2}}\hspace{-0.1pt}\sum_{j=0}^{J}\hspace{-0.1pt}\beta_{j}\hspace{-0.1pt}\sum_{i=1}^{s_{j}}w_{ji}^{2}
\hspace{-0.1pt}$ in the second term in \eqref{upperBoundComputeWL1ApplyAEtaSet}.
Finally, applying the inequality \eqref{logInequality} to the last term in \eqref{upperBoundComputeWL1ApplyAEtaSet} gives 
\begin{equation}\label{upperBoundComputeWL1ApplyAEtaSetLog}
U_{g}\hspace{-0.3pt}\leq 2\alpha\log(\frac{n}{\bar{s}})\hspace{-0.2pt}+\frac{7}{5}\bar{s}.
\end{equation}
As a result, from Proposition \ref{propUpper} (in Appendix \ref{appendix}), we get the bound in \eqref{boundNoiseless} as
\begin{equation}\label{upperBoundComputeWL1ApplyAEtaSetMeasurement}
m_{t}\geq 2\alpha\log(\frac{n}{\bar{s}})\hspace{-0.2pt}+\frac{7}{5}\bar{s}+1.
\end{equation}
For slowly-changing $\bv_{t}$, we assume that the measurement error $\|\mathbf{\Phi}(\hat{\bv}_{t}-\bv_{t})\|_{2}<\sigma$. Applying Proposition \ref{propUpper} (see Appendix \ref{appendix}) for the noisy case, we get the bound in \eqref{boundNoisy}.
\end{proof}

%

\section{Numerical Experiment}
\label{Experiment}
\begin{figure*}[tp!]
	\centering
	\subfigure[$n$-$\ell_{1}$ minimization]{
		\includegraphics[width=0.25\textwidth]{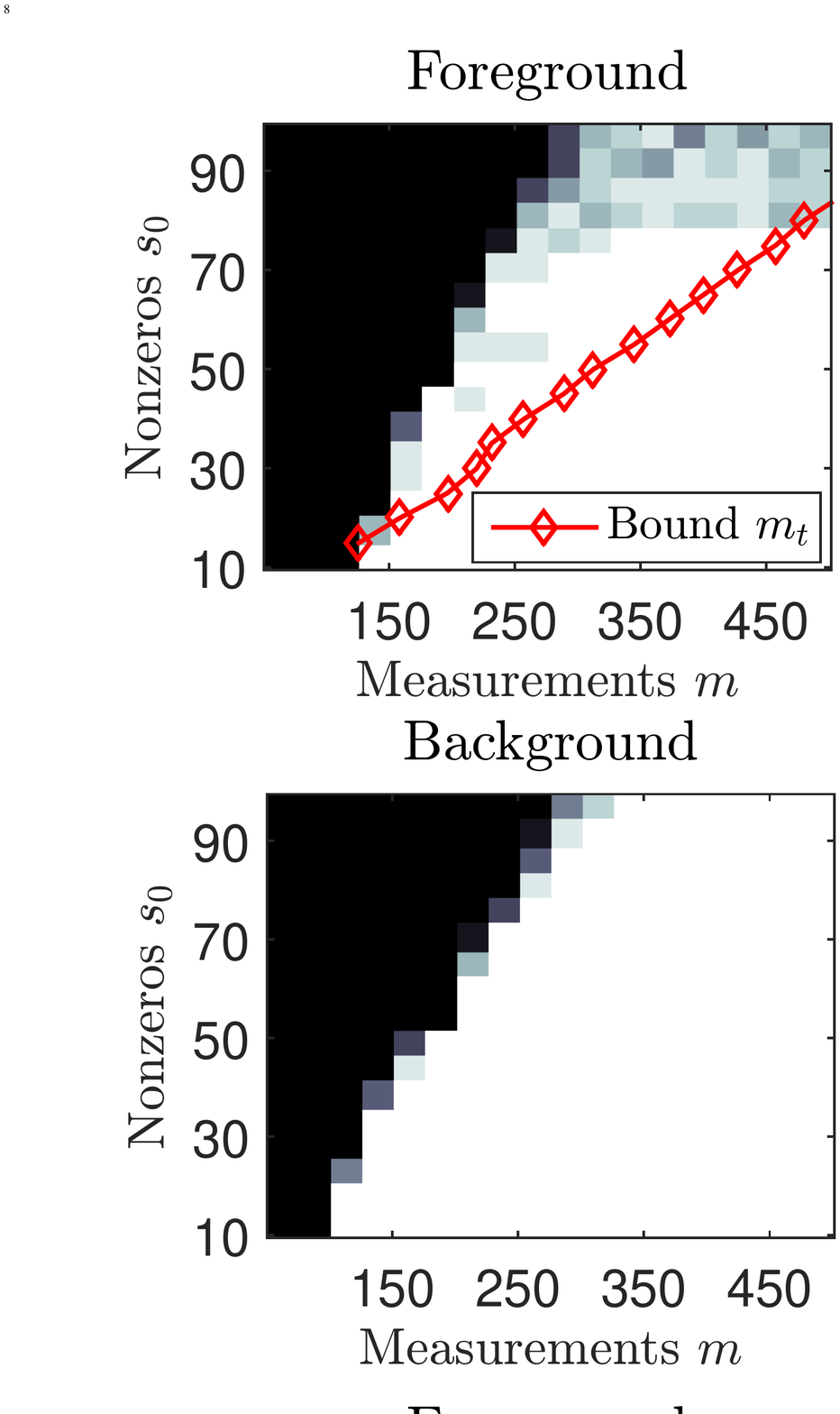}\label{corpcaPerform}}
	\subfigure[$\ell_{1}$-$\ell_{1}$ minimization]{
		\includegraphics[width=0.25\textwidth]{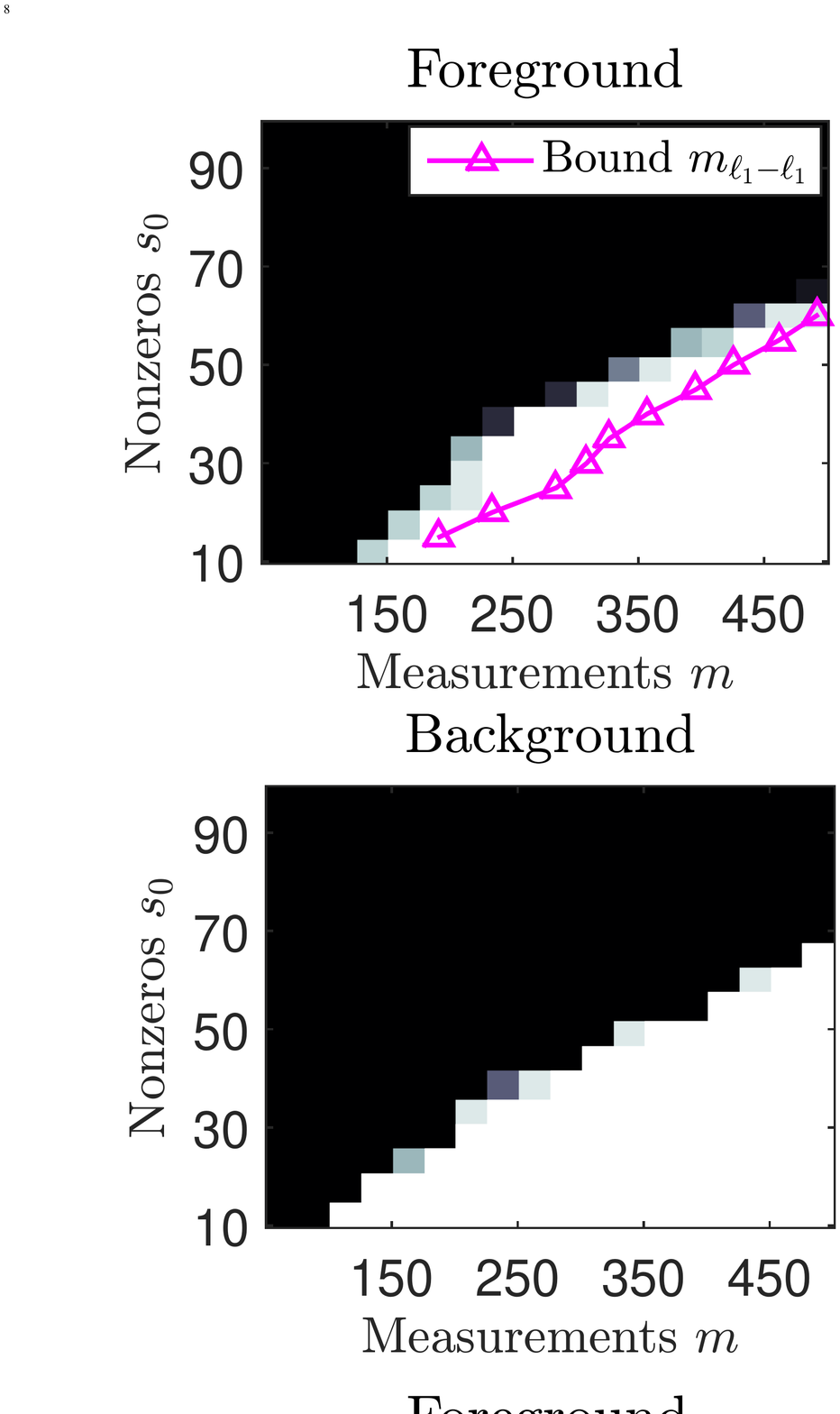}\label{fistaL1L1Perform}}
	\subfigure[$\ell_{1}$ minimization]{
		\includegraphics[width=0.25\textwidth]{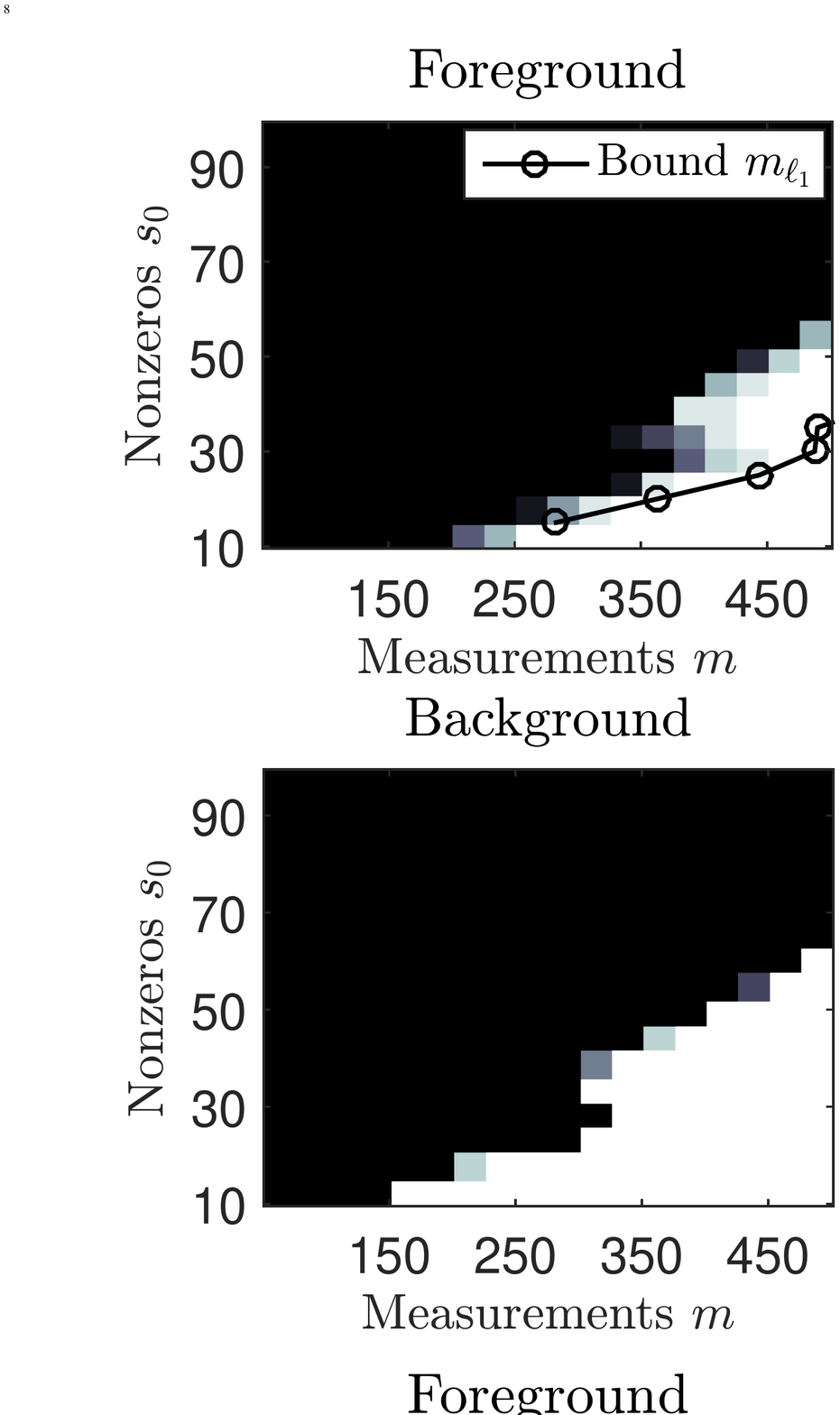}\label{fistaPerform}}	 
	\subfigure[Scale]{
		\includegraphics[width=0.11\textwidth]{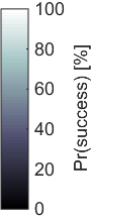}\label{scale}}		
	\caption{The averages of success probabilities and bounds of CORPCA with different minimizations (a)-(c), to recover $\bx_{t}\in\mathbb{R}^{500}$, where the $x$-axis is the number of measurements $m$ and the $y$-axis is the number of nonzeros $s_{0}$. The gray scale (d) is proportional to $\mathrm{Pr(sussess)}$ from pure black to white.
	}\label{figAccuracy}
\end{figure*}
We evaluate the performance of our Algorithm \ref{CORPCAAlg} employing the proposed method, the existing $\ell_{1}$ minimization \cite{Beck09}, and the existing $\ell_{1}$-$\ell_{1}$ \cite{MotaGLOBALSIP14,MotaARXIV14} minimization methods. We also compute the established bound \eqref{boundNoisy} in Theorem \ref{RAMSIABound}, which serves as an estimate of number of measurements that CORPCA requires for successful recovery. We generate our data as follows. We generate the low-rank component $\bL\hspace{-0.2pt}=\hspace{-0.2pt}\bU\bV^{\mathrm{T}}$, where $\bU\hspace{-0.2pt}\in\hspace{-0.2pt}\mathbb{R}^{n\times r}$ and $\bV\hspace{-0.2pt}\in\hspace{-0.2pt}\mathbb{R}^{(d+q)\times r}$ are random matrices whose entries are drawn from the standard normal distribution. We set $n\hspace{-0.2pt}=\hspace{-0.2pt}500$, $r\hspace{-0.1pt}=\hspace{-0.1pt}5$ (rank of $\bL$), and that $d\hspace{-0.1pt}=\hspace{-0.1pt}100$ is the number of vectors for training and $q=100$ is the number of testing vectors. This yields $\bL=[\bx_{1}~...~\bx_{d+q}]$. We generate $\bS=[\bv_{1}~...~\bv_{d+q}]$, where at time instance $t-1$, $\bx_{t-1}\in\mathbb{R}^{n}$ is generated from
the standard normal distribution with support $s_{0}$. Our purpose is to consider a sequence of correlated sparse vectors $\bx_{t}$. Therefore, we generate $\bx_{t}$ satisfying $\|\bx_{t}-\bx_{t-1}\|_{0}=s_{0}/2$. This could lead to $\|\bx_{t}\|_{0}> s_{0}$. To avoid a large increase of $\|\bx_{t}\|_{0}$, we set the constraint $\|\bx_{t}\|_{0}\in[s_{0},s_{0}+15]$, whenever $\|\bx_{t}\|_{0}>s_{0}+15$, $\bx_{t}$ is randomly reset to  $\|\bx_{t}\|_{0}=s_{0}$ by setting $\|\bx_{t}\|_{0}-s_{0}$ positions that are randomly selected to zero. In this work, we test our algorithms for $s_{0}=10$ to 90.

Our prior information is initialized as follows. To illustrate real scenarios, where we do not know the sparse and low-rank components, we use the batch-based RPCA \cite{CandesRPCA} to separate the training set $\bM_{0}=[\bx_{1}+\bv_{1}~...~\bx_{d}+\bv_{d}]$ to obtain $\bB_{0}=[\bv_{1}~...~\bv_{d}]$. In this experiment, we use three ($J=3$) sparse components as prior information and we set $\bZ_{0}:=\{\mathbf{0},\mathbf{0},\mathbf{0}\}$. We run CORPCA (Sec. \ref{corpca}) on the test set $\bM=[\bx_{d+1}+\bv_{d+1}~...~\bx_{d+q}+\bv_{d+q}]$. 

We assess the accuracy of recovery $\bhx_{t}$ versus $\bx_{t}$ in terms of the success probability, denoted as $\mathrm{Pr(success)}$, versus the number of measurements $m$. At time instance $t$, the original source $\bx_{t}\hspace{-0.1pt}+\hspace{-0.1pt}\bz_{t}$ is projected onto lower-dimensional observations $\by_{t}$ having $m$ measurements. For a fixed $m$, the $\mathrm{Pr(success)}$ is the number of times, in which the source $\bx_{t}$ is recovered as $\bhx_{t}$ with an error $\|\bhx_{t}-\bx_{t}\|_{2}/\|\bx_{t}\|_{2}\leq \hspace{-0.2pt}10^{-2}$, divided by the total 50 Monte Carlo simulations and where we have set parameters $\epsilon\hspace{-1pt}=\hspace{-1pt}0.8$, $\lambda\hspace{-1pt}=\hspace{-1pt}1/\sqrt{n}$. In addition, we will evaluate the obtained bound $m_t$ \eqref{boundNoisy}, $m_{\ell_{1}\hspace{-0.1pt}\text{-}\ell_{1}}$ \eqref{l1-l1 bound}, and $m_{\ell_{1}}$ \eqref{l1 bound} \cite{MotaGLOBALSIP14} in the presence of noisy measurements $\|\mathbf{\Phi}(\hat{\bv}_{t}\hspace{-0.1pt}-\hspace{-0.1pt}\bv_{t})\|_{2}\hspace{-0.2pt}<\hspace{-0.2pt}\sigma$ in the case of slowly-changing low-rank components $\bv_{t}$. As shown in \eqref{boundNoisy} of Theorem \ref{RAMSIABound}, these bounds depend on $\rho$. We experimentally select $\rho\hspace{-0.2pt}\in\hspace{-0.2pt}(0,1)$ to give that our estimated bound is consistent with the practical results, here we have set $\rho$ to values  of $\frac{0.8}{3};\frac{0.6}{3};\frac{0.4}{3};~\text{and}$ for $m_t$ $m_{\ell_{1}\hspace{-1pt}\text{-}\ell_{1}}$, and $m_{\ell_{1}}$, respectively.

The results in Fig. \ref{corpcaPerform} show the efficiency of CORPCA employing $n$-$\ell_{1}$ minimization: that, at certain sparsity degrees, we can recover the 500-dimensional data from measurements of much lower dimesions ($m = 150$ to 300, see white areas in Fig. \ref{corpcaPerform}). Furthermore, the measurement bound $m_{t}$ (red line in Fig. \ref{corpcaPerform}) is consistent with the practical results. It is also clear that the $\ell_{1}$ and $\ell_{1}$-$\ell_{1}$ minimization methods (see Figs. \ref{fistaL1L1Perform}+\ref{fistaPerform}) lead to a higher number of measurements, thereby illustrating the benefit of incorporating multiple side information into the problem.
\vspace{-0pt}
\section{Video Foreground-Background Separation}
We assess our CORPCA method in the application of compressive video separation and compare it against existing methods. We run all methods listed in Table \ref{tableMethods} on typical test video content \cite{Li04}. 
In this experiment, we use $d=100$ frames as training vectors for the proposed CORPCA as well as for GRASTA \cite{JHe12} and ReProCS \cite{GuoQV14}.
\vspace{-1pt}
\vspace{-1pt}
\begin{table}[t!]
	\setlength{\tabcolsep}{5pt}
	\renewcommand{\arraystretch}{1.3}       
	\caption{Summary of recovery characteristics}
	\label{tableMethods}
	\vspace{-0pt}
	\centering
	\begin{tabular}{l c c c c c}
		\hline                
		\hspace{-3pt}&&\hspace{-10pt}CORPCA &\hspace{-3pt}RPCA &\hspace{-3pt}GRASTA& \hspace{-3pt}ReProCS\\  
		\vspace{-1pt}   
		\hspace{-5pt}
		&\hspace{-5pt}&\hspace{-5pt}  &\hspace{-5pt}\cite{CandesRPCA} &\hspace{-5pt}\cite{JHe12} &\hspace{-5pt}\cite{GuoQV14}\\       
		\hline
		\hspace{-3pt}Online &&\hspace{-10pt} \color{blue}{\checkmark} & & \hspace{-3pt}\color{blue}{\checkmark}&\hspace{-3pt}\color{blue}{\checkmark}\\
		\hline
		\hspace{-3pt}Full data&&  \hspace{-10pt}\color{blue}{\checkmark} &  \hspace{-3pt}\color{blue}{\checkmark} &  \hspace{-3pt}\color{blue}{\checkmark}& \hspace{-3pt}\color{blue}{\checkmark}\\
		\hline
		\hspace{-3pt}\multirow{2}{*}{Compressed}\hspace{-3pt}&\hspace{-3pt}Foreground\hspace{-3pt}&  \hspace{-10pt}\color{blue}{\checkmark} &   &  & \hspace{-3pt}\color{blue}{\checkmark}\\
		
		&\hspace{-3pt}Background\hspace{-3pt}&   \hspace{-10pt}\color{blue}{\checkmark} &   & \hspace{-3pt}\color{blue}{\checkmark}&\\
		\hline
		\vspace{-1pt}
	\end{tabular}
\end{table} 
\subsubsection{Visual Evaluation}
\vspace{-0pt}
We first consider background-foreground video separation with full access to the video data (the data set $\bM$); the visual results of the various methods are illustrated in Fig. \ref{figVisualPerform}. It is evident that, for both the video sequences, CORPCA delivers superior visual results than the other methods, which suffer from less-details in the foreground and noisy background images.

Fig. \ref{figVisualCompressed} presents the results of CORPCA under various rates on the number of measurements $m$ over the dimension $n$ of the data (the size of the vectorized frame). The results show that we can recover the foreground and background even by accessing a small number of measurements; for instance, we can obtain good-quality reconstructions with only $m/n=0.6$ and $m/n=0.4$ for \texttt{Bootstrap} [see Fig. \ref{bootstrapCompressed}] and \texttt{Curtain} [see Fig. \ref{curtainCompressed}], respectively. \texttt{Bootstrap} requires more measurements than \texttt{Curtain} due to the more complex foreground information. For comparison, we illustrate the visual results obtained with ReProCS---which, however, can only recover the foreground using compressive measurements---in Fig.~\ref{figVisualCompressedReprocs}. It is clear that the reconstructed foreground images have a poorer visual quality compared to CORPCA even at a high rate $m/n=0.8$\footnote{The original test videos and the reconstructed separated sequences are available online~\cite{corpcaSource}.}. 
\begin{figure}[tp!]
	\centering
	\subfigure[\vspace{-0.15pt}\texttt{Bootstrap}]{
		\includegraphics[width=0.48\textwidth]{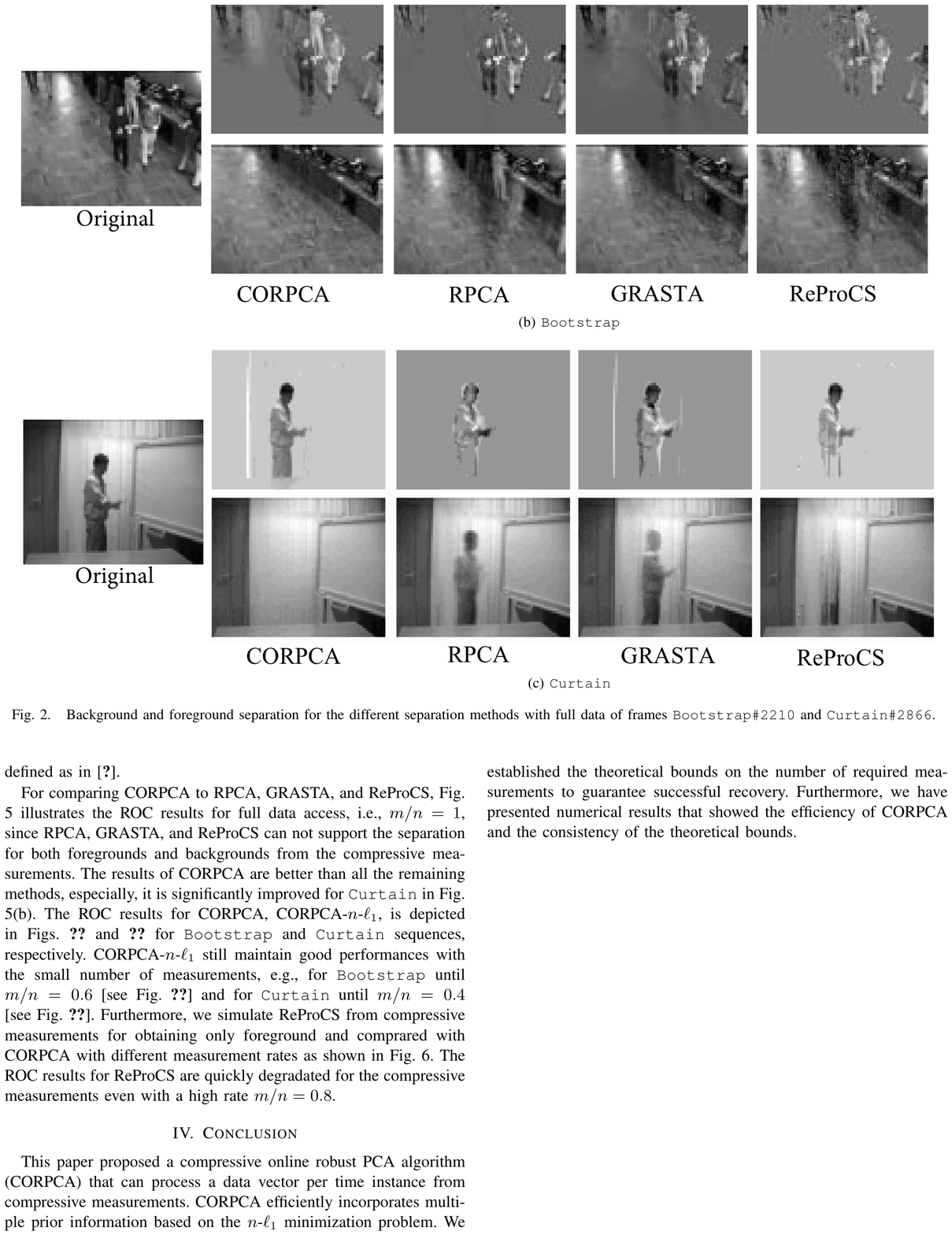}\label{bootstrapRateFull}}
	%
	\subfigure[\vspace{-0.15pt}\texttt{Curtain}]{
		\includegraphics[width=0.48\textwidth]{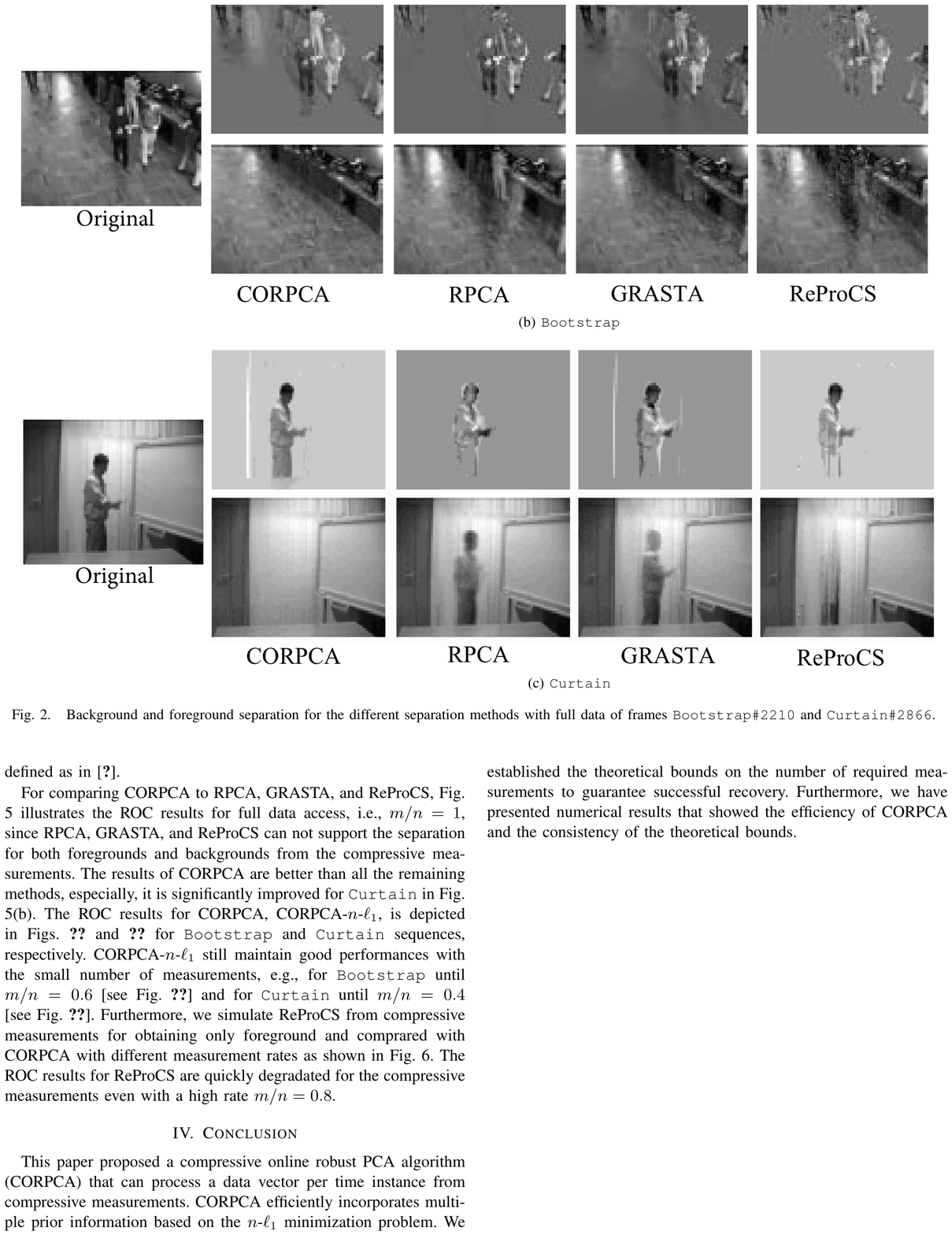}\label{curtainRateFull}}
	\vspace{-4pt}
	\caption{Background and foreground separation for the different separation methods with full data access \texttt{Bootstrap\#2213} and \texttt{Curtain\#2866}.
	}\label{figVisualPerform}
	
	\vspace{-0pt}
\end{figure}
\begin{figure}[tp!]
	\centering
	\subfigure[\vspace{-0.15pt}\texttt{Bootstrap}]{
		\includegraphics[width=0.43\textwidth]{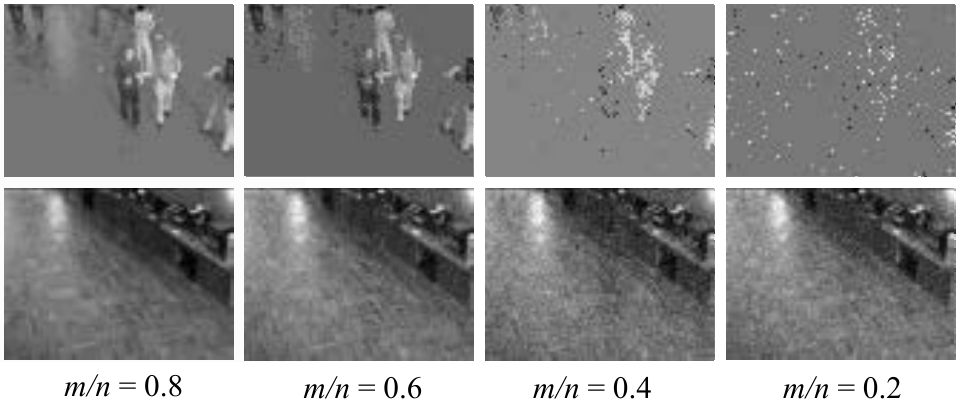}\label{bootstrapCompressed}}
	\subfigure[\vspace{-0.15pt}\texttt{Curtain}]{
		\includegraphics[width=0.43\textwidth]{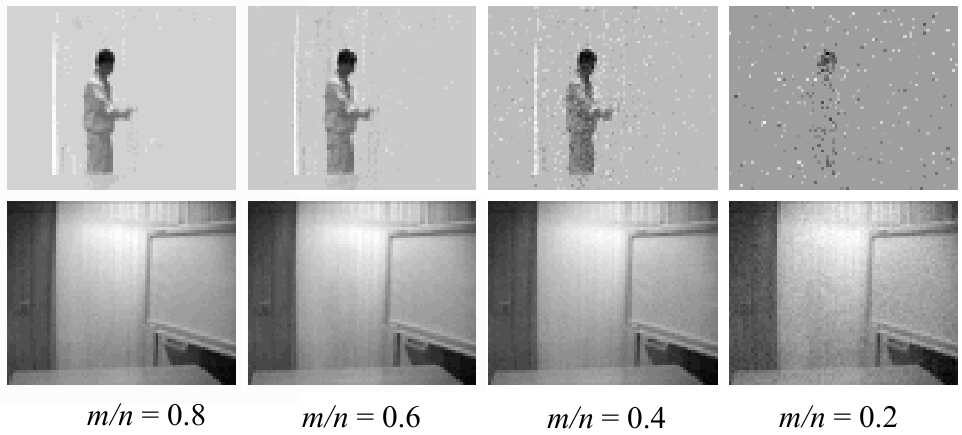}\label{curtainCompressed}}
	\vspace{-4pt}
	\caption{Compressive background and foreground separation of CORPCA with different measurement rates $m/n$.
	}\label{figVisualCompressed}
	\vspace{-0pt}
\end{figure}
\begin{figure}[th!]
	\centering
	\subfigure[\vspace{-0.15pt}\texttt{Bootstrap}]{
		\includegraphics[width=0.43\textwidth]{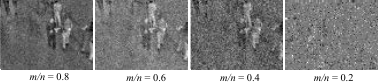}\label{bootstrapCompressedReprocs}}
	\subfigure[\vspace{-0.15pt}\texttt{Curtain}]{
		\includegraphics[width=0.43\textwidth]{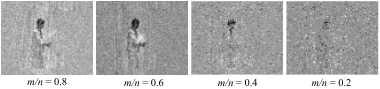}\label{curtainCompressedReprocs}}
	\vspace{-4pt}
	\caption{Compressive foreground separation of ReProCS with different measurement rates $m/n$.
	}\label{figVisualCompressedReprocs}
	\vspace{-0pt}
\end{figure}
We consider two videos \cite{Li04}, \texttt{Bootstrap} (60$\times$80 pixels) and \texttt{Curtain} (64$\times$80 pixels) [c.f., Fig. \ref{figVisualPerform}], having a static and a dynamic background, respectively.

\subsubsection{Quantitative Results}
\vspace{-0pt}
\begin{figure*}[tp!]
	\centering
	\subfigure[\vspace{-0.15pt}\texttt{Bootstrap}: CORPCA-$n$-$\ell_1$]{
		\includegraphics[width=0.315\textwidth]{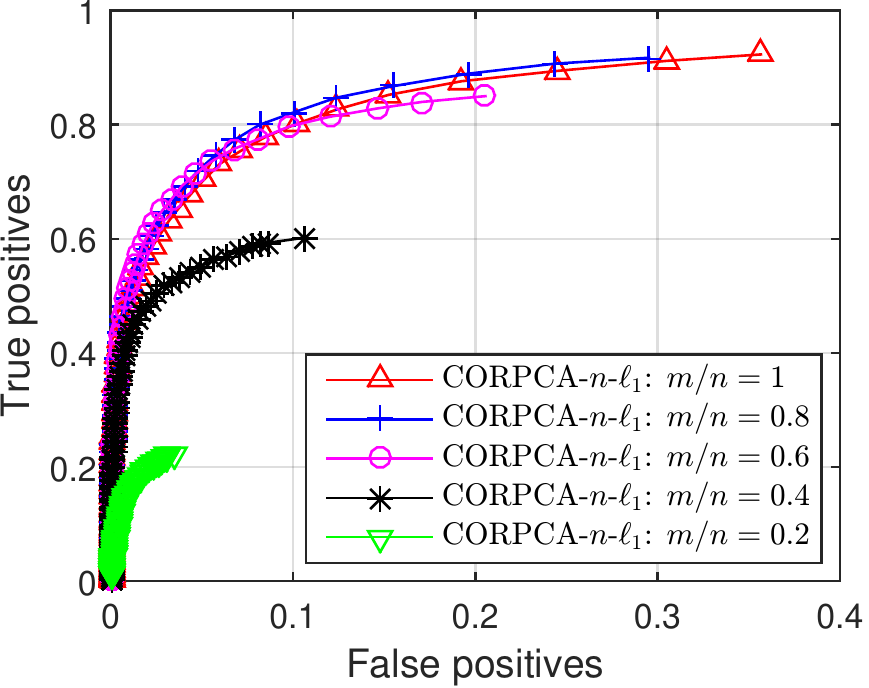}\label{bootstrapROCCorpcanl1}}
	\subfigure[\vspace{-0.15pt}\texttt{Bootstrap}: CORPCA-$\ell_1$-$\ell_1$]{
		\includegraphics[width=0.325\textwidth]{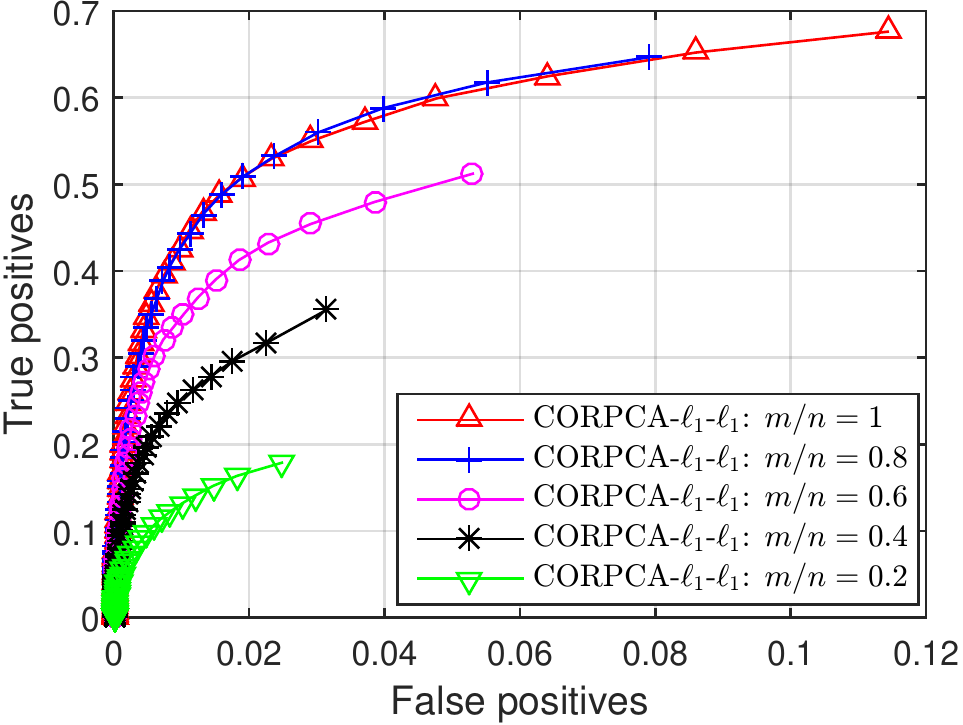}\label{bootstrapROCCorpcal1l1}}	
	\subfigure[\vspace{-0.15pt}\texttt{Bootstrap}: CORPCA-$\ell_1$]{
		\includegraphics[width=0.325\textwidth]{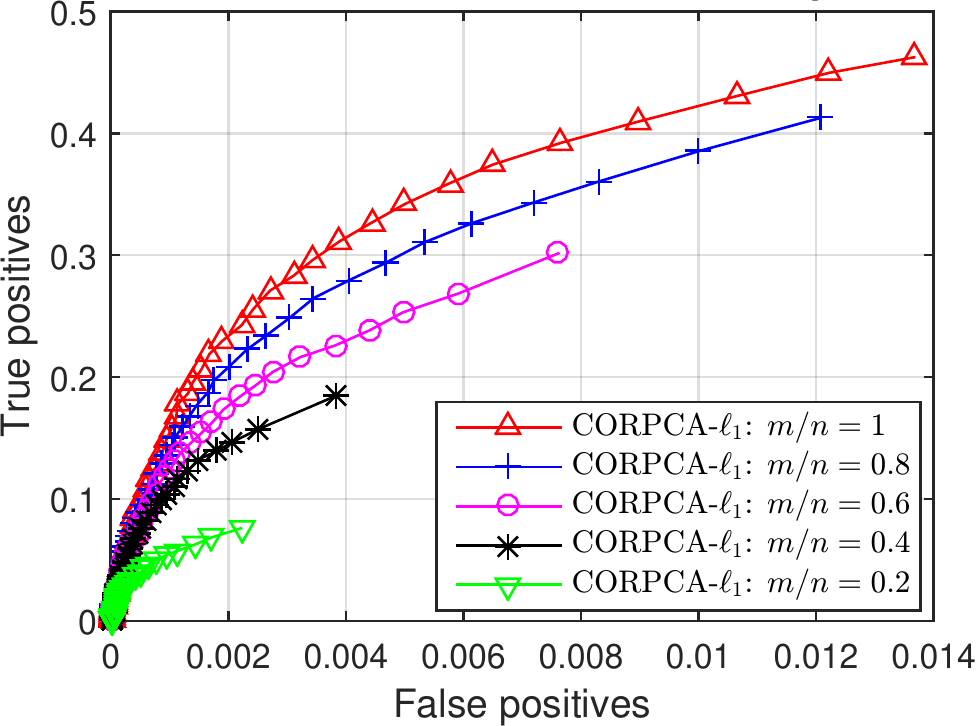}\label{bootstrapROCCorpcal1}}	
	\subfigure[\vspace{-0.15pt}\texttt{Curtain}: CORPCA-$n$-$\ell_1$]{
		\includegraphics[width=0.315\textwidth]{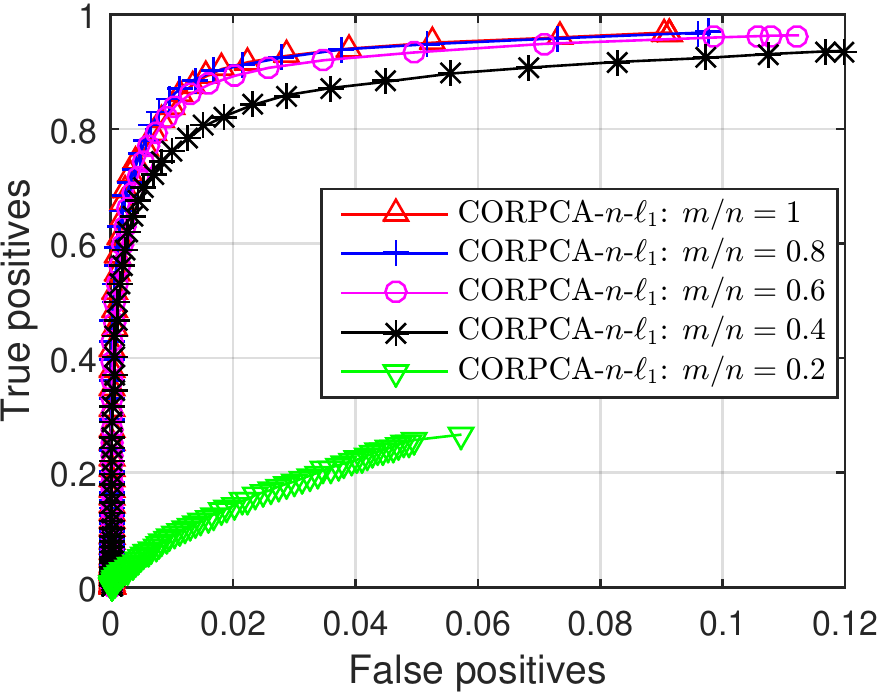}\label{curtainROCCorpcanl1}}
	\subfigure[\vspace{-0.15pt}\texttt{Curtain}: CORPCA-$\ell_1$-$\ell_1$]{
		\includegraphics[width=0.325\textwidth]{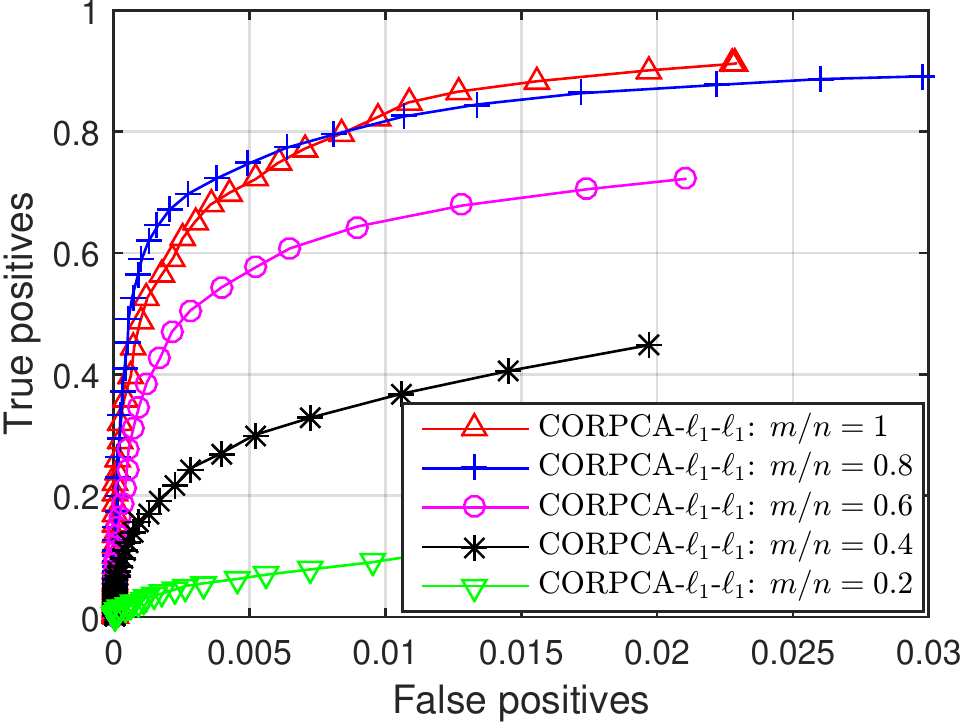}\label{curtainROCCorpcal1l1}}	
	\subfigure[\vspace{-0.15pt}\texttt{Curtain}: CORPCA-$\ell_1$]{
		\includegraphics[width=0.32\textwidth]{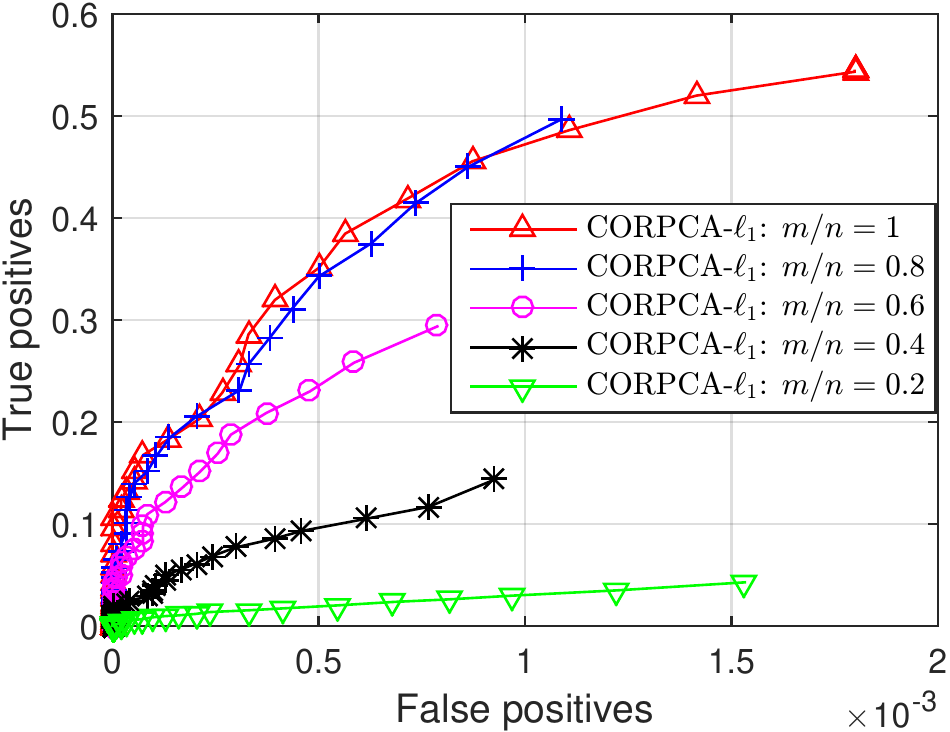}\label{curtainROCCorpcal1}}	
	\caption{ROC for CORPCA deploying different sparse recovery algorithms with different measurement rates for \texttt{Bootstrap}.}\label{bootstrapCurtainROCCorpca}
	\vspace{-0.14pt}
\end{figure*}
\begin{figure*}[tp!]
	\centering
	\subfigure[\vspace{-0.15pt}\texttt{Bootstrap}]{
		\includegraphics[width=0.33\textwidth]{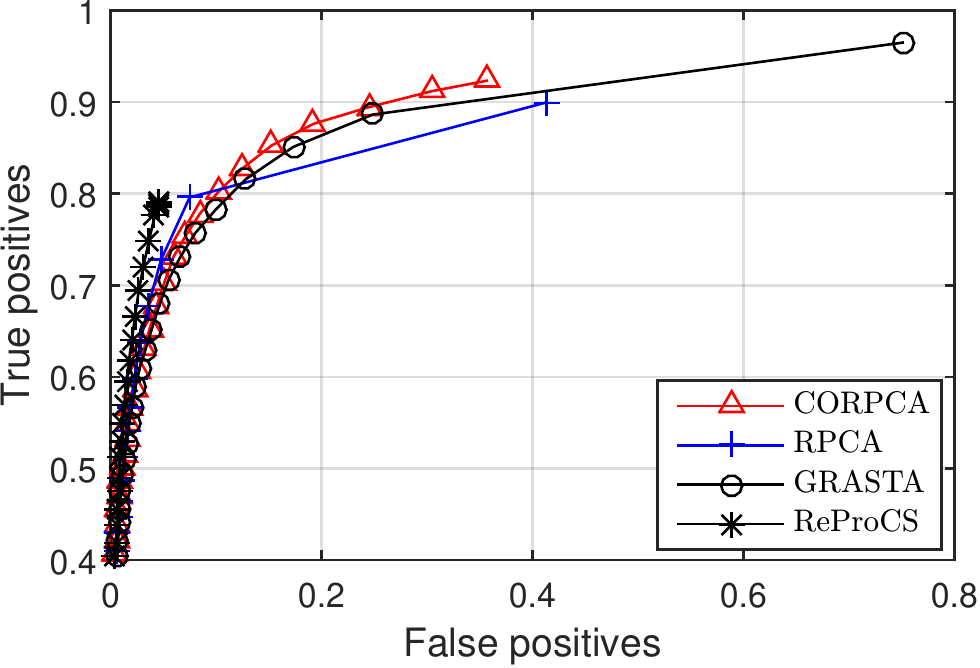}\label{bootstrapROCFull}}
	\subfigure[\vspace{-0.15pt}\texttt{Curtain}]{
		\includegraphics[width=0.33\textwidth]{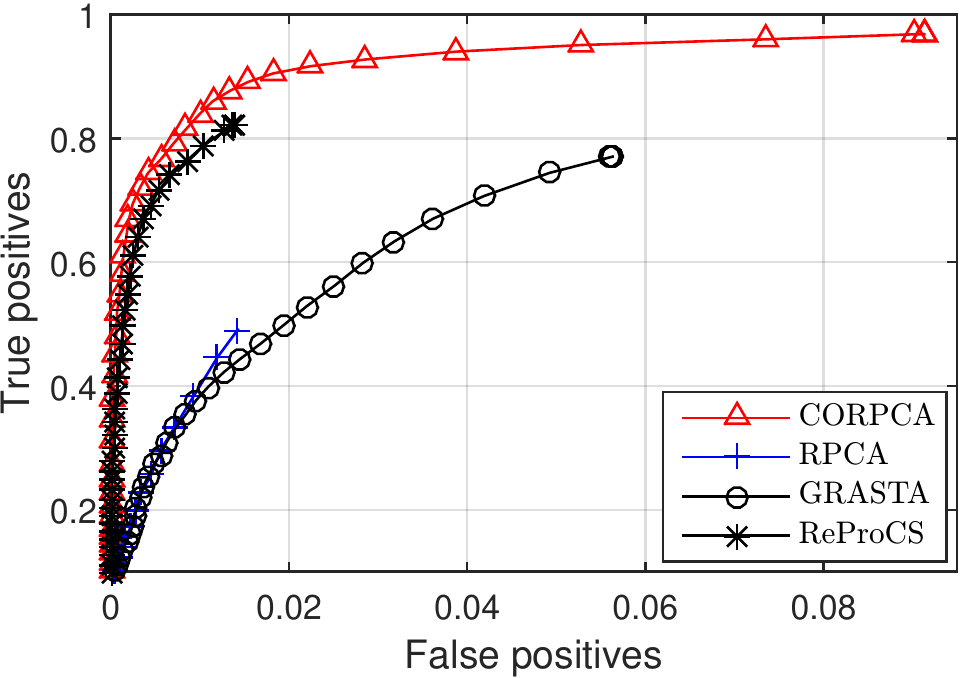}\label{curtainROCFull}}
	\vspace{-4pt}
	\caption{ROC for the different separation methods with full data
		.}\label{ROCFull}
	\vspace{-0pt}
\end{figure*}
\vspace{-0pt}
\begin{figure*}[tp!]
	\centering
	\hspace{-5pt}\subfigure[\vspace{-0.15pt}\texttt{Bootstrap}]{
		\includegraphics[width=0.345\textwidth]{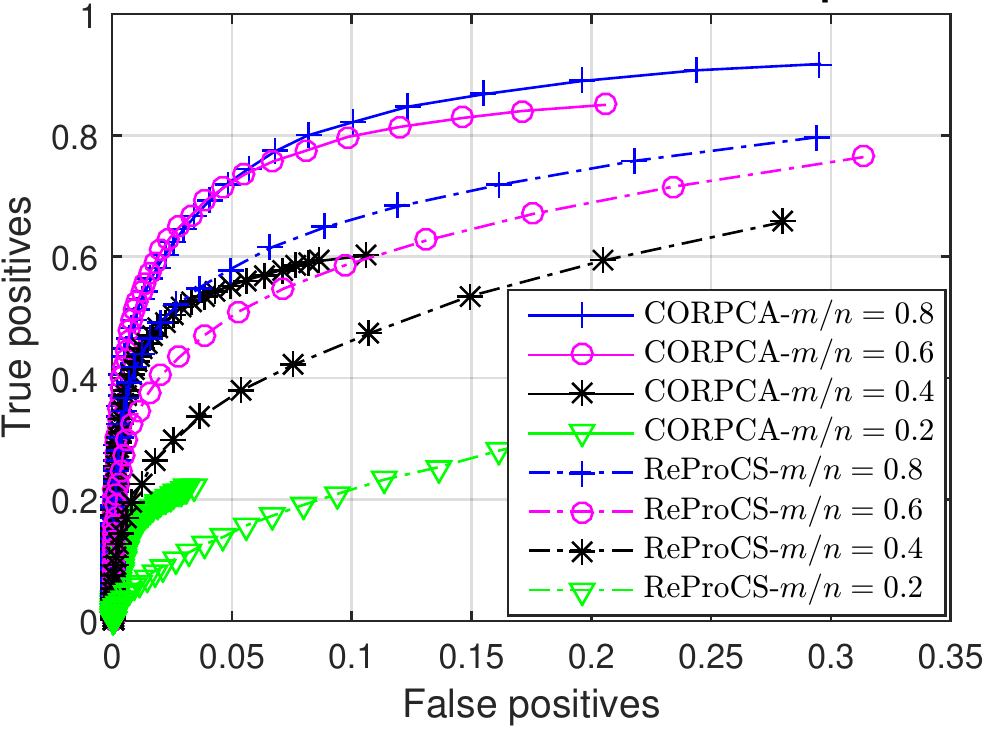}\hspace{-5pt}\label{bootstrapROCReprocs}}
	\subfigure[\vspace{-0.15pt}\texttt{Curtain}]{
		\includegraphics[width=0.345\textwidth]{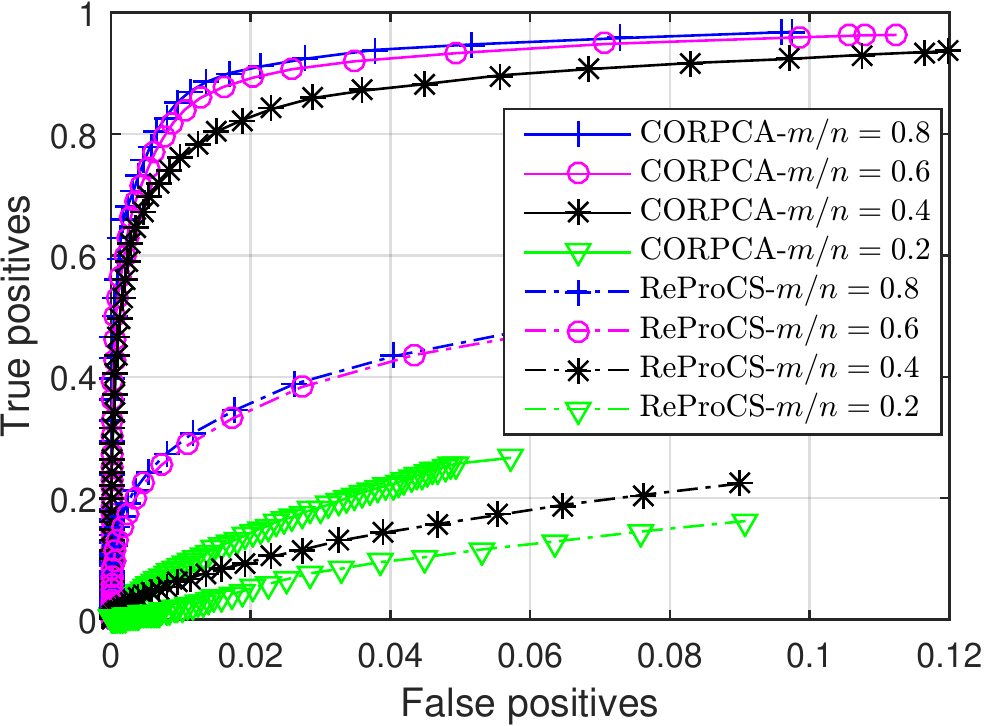}\hspace{-5pt}\label{curtainROCReprocs}}
	\vspace{-4pt}
	\caption{ROC for CORPCA and ReProCS with compressive measurements.}\label{ROCReprocs}
	\vspace{2pt}
\end{figure*}
We evaluate quantitatively the separation performance via the \textit{receiver operating curve} (ROC) metric \cite{MDikmen}. The metrics \textit{True positives} and \textit{False positives} are defined as in \cite{MDikmen}. 

The ROC results for CORPCA deploying different recovery algorithms, CORPCA-$n$-$\ell_1$, CORPCA-$\ell_1$-$\ell_1$, and CORPCA-$\ell_1$, are depicted in Fig. \ref{bootstrapCurtainROCCorpca} for \texttt{Bootstrap} and \texttt{Curtain} sequences. It is clear that CORPCA-$\ell_1$-$\ell_{1}$ [see Figs. \ref{bootstrapROCCorpcal1l1}, \ref{curtainROCCorpcal1l1}] and CORPCA-$\ell_{1}$ [see Figs. \ref{bootstrapROCCorpcal1}, \ref{curtainROCCorpcal1}] give worse curves compred with CORPCA-$n$-$\ell_1$ [see Figs. \ref{bootstrapROCCorpcanl1}, \ref{curtainROCCorpcanl1}]. Specifically, CORPCA-$n$-$\ell_1$ gives high performances with a small number of measurements, e.g., for \texttt{Bootstrap} until $m/n=0.6$ [see Fig. \ref{bootstrapROCCorpcanl1}] and for \texttt{Curtain} until $m/n=0.4$ [see Fig. \ref{curtainROCCorpcanl1}]. 

Fig. \ref{ROCFull} illustrates the ROC results when assuming full data access, i.e., $m/n=1$, of CORPCA, RPCA, GRASTA, and ReProCS. The results show that CORPCA delivers higher performance than the other methods, especially for the \texttt{Curtain} video sequence [c.f., Fig. \ref{curtainROCFull}]. Furthermore, we compare the foreground recovery performance of CORPCA against ReProCS for different compressive measurement rates: $m/n=\{0.8;0.6;0.4;0.2\}$. The ROC results in Fig. \ref{ROCReprocs} show that CORPCA achieves a relatively high performance with a small number of measurements. 

\section{Conclusion}
\label{conclusion}
This paper proposed a compressive online robust PCA with prior information for recursive separation. The proposed CORPCA method can process one data frame per time instance from compressive measurements. CORPCA efficiently incorporates multiple prior information based on the $n$-$\ell_{1}$ minimization problem. To express the performance, we established the theoretical bounds on the number of required measurements to guarantee successful recovery. The numerical results showed the efficiency of CORPCA in terms of both the practical results and the theoretical bounds. The results also revealed the advantage of incorporating prior information by employing $n$-$\ell_{1}$ minimization to achieve significant improvements over that of using the existing $\ell_{1}$-$\ell_{1}$ and $\ell_{1}$ minimization methods. We also test our method on the compressive video separation application using video data. The results revealed the advantage of incorporating prior information by employing $n$-$\ell_{1}$ minimization and demonstrated the superior performance improvement offered by CORPCA compared to the existing methods. 
\section{Appendix}\label{appendix}
\subsection{Prior Art in Compressed Sensing}\label{fundamentalRecovery}
The problem of recursive signal recovery employs Compressed Sensing (CS) \cite{CandesTIT06,DonohoCOM06} to recover sparse components from low-rank components \cite{VaswaniZ16,MotaTSP16}, which is considered as a dynamic CS. Therefore, we review some related CS results before stating our contributions. Let $\bx\in\mathbb{R}^{n}$ denote a high-dimensional sparse vector. The source $\bx$ can be reduced by sampling via a linear projection at the encoder~\cite{CandesTIT06}. We denote a random measurement matrix by $\mathbf{\Phi}\in \mathbb{R}^{ m\times n}$ (with~$m\ll{n}$), whose elements are sampled from an i.i.d. Gaussian distribution. Thus, we get a measurement vector $\by=\mathbf{\Phi}\bx$, with $\by\in\mathbb{R}^{ m}$. At the decoder, $\bx$ can be recovered by solving the Basis Pursuit problem~\cite{CandesTIT06,DonohoCOM06}:
\begin{equation}\label{l1-norm}
\min_{\bx} \|\bx\|_{1} \mathrm{~subject~to~} \by=\mathbf{\Phi}\bx,
\end{equation}
where $\|\bx\|_{p}:=(\sum_{i=1}^{n}|x_{i}|^{p})^{1/p}$ is the $\ell_{p}$-norm of $\bx$ wherein $x_{i}$ is an element of $\bx$.
Problem \eqref{l1-norm} becomes an instance of finding a general solution:
\begin{equation}\label{l1-general}
\min_{\bx}\{H(\bx) = f(\bx) + g(\bx)\},
\end{equation}
where $f:=\mathbb{R}^{n} \rightarrow\mathbb{R}$ is a smooth convex function and~$g:=\mathbb{R}^{n} \rightarrow\mathbb{R}$ is a continuous convex function, which is possibly non-smooth. Problem \eqref{l1-norm} emerges from Problem~\eqref{l1-general} when setting $g(\bx)=\lambda \|\bx\|_{1}$ and $f(\bx)=\frac{1}{2}\|\mathbf{\Phi}\bx-\by\|^{2}_{2}$, with Lipschitz constant $L_{\nabla f}$ \cite{Beck09}. 

We consider how many measurements are required for successful recovery. The classical $\ell_{1}$ minimization problem in CS \cite{DonohoTIT06,DonohoCOM06,CandesTIT06} requires $m_{\ell_{1}}$ measurements \cite{Venkat12,MotaGLOBALSIP14,MotaARXIV14} for successful reconstruction, bounded as
\begin{equation}\label{l1 bound}
m_{\ell_{1}} \geq 2s_{0}\log\frac{n}{s_{0}} + \frac{7}{5}s_{0} + 1,
\end{equation}
where $s_{0}:=\|\bx\|_0=|\{i: x_{i}\neq 0\}|$ denotes the number of nonzero elements in $\bx$ as the support of $\bx$, with $|.|$ denoting the cardinality of a set and $\|\cdot\|_0$ being the $\ell_0$-pseudo-norm.

Furthermore, the $\ell_{1}$-$\ell_{1}$ minimization approach~\cite{MotaGLOBALSIP14,MotaARXIV14,MotaICASSP15} reconstructs $\bx$ given a signal $\bz \in \mathbb{R}^{n}$ as side information by solving the following problem:
\begin{equation}\label{l1-l1minimization}
\min_{\bx}\Big\{\frac{1}{2}\|\mathbf{\Phi}\bx-\by\|^{2}_{2} + \lambda (\|\bx\|_{1}+\|\bx-\bz\|_{1})\Big\}.
\end{equation}
The bound on the number of measurements required by Problem~\eqref{l1-l1minimization} to successfully reconstruct $\bx$ depends on the quality of the side information signal $\bz$ as \cite{MotaGLOBALSIP14,MotaARXIV14,MotaICASSP15}
\begin{equation}\label{l1-l1 bound}
m_{\ell_{1}\text{-}\ell_{1}} \geq 2\overline{h}\log\Big(\frac{n}{s_{0}+\xi/2}\Big) + \frac{7}{5}\Big(s_{0}+\frac{\xi}{2}\Big) + 1,
\end{equation}
where
\begin{subequations}\label{l1-l1 sparse set}
	\begin{align}
	\xi:&=|\{i:z_{i}\neq x_{i}=0\}|-|\{i:z_{i}= x_{i}\neq 0\}|\nonumber \\
	\bar{h}:&=|\{i:x_{i}>0,x_{i}>z_{i}\}\cup\{i:x_{i}<0,x_{i}<z_{i}\}|\label{l1-l1 sparse setHBar},\nonumber
	\end{align}
\end{subequations}
wherein $x_{i}$, $z_{i}$ are corresponding elements of $\bx$, $\bz$. It has been shown that Problem \eqref{l1-l1minimization} improves over Problem \eqref{l1-norm} provided that the side information has good enough quality~\cite{MotaGLOBALSIP14,MotaARXIV14}. 
\subsection{Background on Measurement Condition}\label{backgroundConvex} 
We restate some key definitions and conditions in convex optimization \cite{Venkat12,Tropp14}, which are used in the derivation of the measurement bounds. 

A \textit{convex cone} $C\subset \mathbb{R}^{n}$ is a convex set that satisfies $C=\tau C$, $\forall\tau\geq0$~\cite{Tropp14}. 
For the cone $C\subset \mathbb{R}^{n}$, a \textit{polar cone} $C^{\circ}$ is the set of outward normals of $C$, defined by
\begin{equation}\label{polarCone}
C^{\circ}:=\{\bu\in \mathbb{R}^{n}:\bu^{\mathrm{T}}\bx\leq 0, ~\forall\bx\in C\}.
\end{equation}
A \textit{descent cone}~\cite[Definition 2.7]{Tropp14} $\mathcal{D}(g,\bx)$, alias \textit{tangent cone}~\cite{Venkat12}, of a convex function $g:=\mathbb{R}^{n} \rightarrow\mathbb{R}$ at a point $\bx\in \mathbb{R}^{n}$---at which $g$ is not increasing---is defined as
\begin{equation}\label{descentCone}
\mathcal{D}(g,\bx):=\bigcup\limits_{\tau \geq 0}\{\by\in \mathbb{R}^{n}:g(\bx+\tau \by)\leq g(\bx)\},
\end{equation}
where $\bigcup$ denotes the union operator.

Recently, a new summary parameter called the \textit{statistical dimension} $\delta(C)$ of cone $C$ \cite{Tropp14} is introduced to estimate the convex cone~\cite[Theorem
4.3]{Tropp14}. The statistical dimension can be expressed in terms of the polar cone $C^{\circ}$ as~\cite[Proposition 3.1]{Tropp14}
\begin{equation}\label{statDimComputePolar}
\delta(C):=\mathbb{E}_{\bg}\big[\mathrm{dist}^{2}(\textbf{\textsl{g}},C^{\circ})].
\end{equation}
\subsubsection{Gaussian Width}
The Gaussian width \cite{Venkat12} is a summary parameter for convex cones; it is used to measure the aperture of a convex cone. For a convex cone $C\subset \mathbb{R}^{n}$, considering a subset $C\cap \mathbb{S}^{n-1}$ where $\mathbb{S}^{n-1}\subset \mathbb{R}^{n}$ is a unit sphere, the \textit{Gaussian width} \cite[Definition 3.1]{Venkat12} is defined as
\begin{equation}\label{gaussianWidth}
\omega(C):=\mathbb{E}_{\bg}[\sup_{\bu \in C\cap \mathbb{S}^{n-1}} \textbf{\textsl{g}}^{\mathrm{T}}\bu].
\end{equation}
where $\bg\sim\mathcal{N}(0,\mathbf I_{n})$ is a vector of $n$ independent, zero-mean, and unit-variance Gaussian random variables and $\mathbb{E}_{\bg}[\cdot]$ denotes the expectation with respect to $\bg$.
The Gaussian width~\cite[Proposition 3.6]{Venkat12} can further be bounded as
\begin{equation}\label{gaussianWidthBound}
\omega(C)\leq \mathbb{E}_{\bg}\big[\mathrm{dist}(\textbf{\textsl{g}},C^{\circ})],
\end{equation}
where $\mathrm{dist}(\textbf{\textsl{g}},C^{\circ})$ denotes the Euclidean distance of $\textbf{\textsl{g}}$ with respect to the set $C^{\circ}$, which is in turn defined as
\begin{equation}\label{euclideanDistance}
\mathrm{dist}(\textbf{\textsl{g}},C^{\circ}):=\min_{\bu}\{\|\textbf{\textsl{g}}-\bu\|_{2}:\bu\in C^{\circ}\}.
\end{equation}


\subsubsection{Measurement Condition}
An optimality condition~\cite[Proposition 2.1]{Venkat12}, \cite[Fact 2.8]{Tropp14} for linear inverse problems states that $\bx_{0}$ is the unique solution of \eqref{l1-general} if and only if
\begin{equation}\label{optimalCondition}
\mathcal{D}(g,\bx_{0})\cap \mathrm{null}(\mathbf{\Phi})=\{\mathbf{0}\},
\end{equation}
where $\mathrm{null}(\mathbf{\Phi})\hspace{-0.2pt}:=\hspace{-0.2pt}\{\bx\in\mathbb{R}^{n}\hspace{-0.2pt}:\hspace{-0.2pt}\mathbf{\Phi}\bx\hspace{-0.2pt}=\hspace{-0.2pt}\mathbf{0}\}$ is the null space of $\mathbf{\Phi}$. We consider the number of measurements $m$ required to successfully reconstruct a given signal $\bx_{0}\in \mathbb{R}^{n}$. Corollary 3.3 in \cite{Venkat12} states that, given a measurement vector $\by=\mathbf{\Phi}\bx_{0}$, $\bx_{0}$ is the unique solution of \eqref{l1-general} with probability at least $1\hspace{-0.2pt}-\hspace{-0.2pt}\exp(-\frac{1}{2}(\sqrt{m}\hspace{-0.2pt}-\hspace{-0.2pt}\omega(\mathcal{D}(g,\bx_{0})))^{2})$ provided that $m\geq \omega^{2}(\mathcal{D}(g,\bx_{0}))+1$. 
\subsubsection{Bound on the Measurement Condition}\label{propositionAppendix}
Let us consider that the subdifferential $\partial g$ \cite{JHiriat} of a convex function $g$ at a point $\bx \in \mathbb{R}^{n}$ is given by $\partial g\hspace{-0.2pt}:=\hspace{-0.2pt}\{\bu \hspace{-0.2pt}\in\hspace{-0.2pt}\mathbb{R}^{n}\hspace{-0.2pt}: g(\by)\hspace{-0.2pt}\geq\hspace{-0.2pt} g(\bx)\hspace{-0.2pt}+\hspace{-0.2pt}\bu^{\mathrm{T}}(\by\hspace{-0.2pt}-\hspace{-0.2pt}\bx)~\text{for all}~ \by \hspace{-0.2pt}\in \hspace{-0.2pt}\mathbb{R}^{n}$\}. From \eqref{statDimComputePolar}, \eqref{gaussianWidthBound}, and~\cite[Proposition 4.1]{Tropp14}, we obtain an upper bound on $\delta(\mathcal{D}(g,\bx))$ as
\begin{equation}\label{upperBound}
\begin{split}
\omega^{2}(\mathcal{D}(g,\bx_{0}))\leq\delta(\mathcal{D}(g,\bx))\leq \min_{\tau\geq 0}\mathbb{E}_{\bg}\big[\mathrm{dist}^{2}(\textbf{\textsl{g}},\tau \hspace{-0.2pt}\cdot \hspace{-0.2pt} \partial g(\bx))\big].
\end{split}
\end{equation}

%
%
We conclude the following proposition for determine the measurement bound.
\begin{proposition}\label{propUpper}\cite[Corrollary 3.3, Proposition 3.6]{Venkat12}
	We have the successful recovery $\bx\hspace{-0.2pt}\in\hspace{-0.2pt}\mathbb{R}^{n}$ in \eqref{l1-general} that we have observed measurements $\by=\mathbf{\Phi}\bx$, where $\mathbf{\Phi}\hspace{-0.2pt}\in\hspace{-0.2pt}\mathbb{R}^{m\times n}$, provided the condition $m\geq U_{g}\hspace{-0.1pt}+\hspace{-0.1pt}1$. For the noisy measurements $\|\by\hspace{-0.2pt}-\hspace{-0.2pt}\mathbf{\Phi}\bx\|_{2}\hspace{-0.2pt}<\hspace{-0.2pt}\sigma$, we have that $\|\bx-\hat{\bx}\|_{2}<\frac{2\sigma}{1-\sqrt{\rho}}$ provided the condition $m\geq \frac{U_{g}\hspace{-0.1pt}+3/2}{\rho}$, where $\hat{\bx}$ is any solution in \eqref{l1-general} and $0<\rho<1$. The quantity $U_{g}$ is calculated given a convex norm function $g\hspace{-0.1pt}:=\hspace{-0.1pt}\mathbb{R}^{n} \hspace{-0.2pt}\rightarrow\hspace{-0.2pt}\mathbb{R}$
	\begin{equation}\label{upperBoundCompute}
	U_{g}= \min_{\tau\geq 0}\mathbb{E}_{\bg}\Big[\mathrm{dist}^{2}(\textbf{\textsl{g}}, \tau \cdot\partial g(\bx))\Big],
	\end{equation}
where $\partial g(\bx)$ is the subdifferential \cite{JHiriat} of $g(\bx)$ at a point $\bx$.
\end{proposition}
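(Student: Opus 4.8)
The plan is to read this statement as a repackaging of the Gaussian convex-geometry recovery guarantees of~\cite{Venkat12}, so that the only substantive work is to reduce both the noiseless and the noisy measurement conditions to the single computable quantity $U_g$ in~\eqref{upperBoundCompute}. Two facts drive everything: the deterministic optimality condition~\eqref{optimalCondition}, which characterizes exact recovery by the descent cone $\mathcal{D}(g,\bx)$ intersecting $\mathrm{null}(\mathbf{\Phi})$ trivially, and the probabilistic control of when a random Gaussian map avoids a cone, governed by the Gaussian width $\omega(\mathcal{D}(g,\bx))$. The bridge from the abstract width to the explicit $U_g$ is exactly the chain~\eqref{upperBound}, namely $\omega^2(\mathcal{D}(g,\bx))\leq\delta(\mathcal{D}(g,\bx))\leq U_g$, which is already in hand.

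For the noiseless claim I would start from~\eqref{optimalCondition} and invoke Corollary~3.3 of~\cite{Venkat12}: for $\mathbf{\Phi}$ with i.i.d.\ Gaussian entries, $\bx$ is the unique minimizer of~\eqref{l1-general} with probability at least $1-\exp(-\tfrac12(\sqrt{m}-\omega(\mathcal{D}(g,\bx)))^2)$ whenever $m\geq\omega^2(\mathcal{D}(g,\bx))+1$. Substituting the upper bound $\omega^2(\mathcal{D}(g,\bx))\leq U_g$ from~\eqref{upperBound}---which rests on~\eqref{statDimComputePolar},~\eqref{gaussianWidthBound}, the identification of the polar of the descent cone with the conic hull of $\partial g(\bx)$, and pulling $\min_{\tau\geq0}$ outside the expectation---the condition $m\geq U_g+1$ already forces $m\geq\omega^2(\mathcal{D}(g,\bx))+1$, so exact recovery holds.

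For the noisy claim the controlling quantity is the restricted minimum gain $\lambda_{\min}:=\min\{\|\mathbf{\Phi}\bu\|_2:\bu\in\mathcal{D}(g,\bx)\cap\mathbb{S}^{n-1}\}$. A short feasibility argument does the deterministic half: the true $\bx$ satisfies the constraint $\|\by-\mathbf{\Phi}\bx\|_2<\sigma$, so any minimizer $\hat{\bx}$ obeys $g(\hat{\bx})\leq g(\bx)$, hence $\hat{\bx}-\bx\in\mathcal{D}(g,\bx)$, while the triangle inequality gives $\|\mathbf{\Phi}(\hat{\bx}-\bx)\|_2<2\sigma$; combining these yields $\|\hat{\bx}-\bx\|_2<2\sigma/\lambda_{\min}$. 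Gordon's escape-through-the-mesh estimate (the robust counterpart of Corollary~3.3 in~\cite{Venkat12}) then lower-bounds $\lambda_{\min}\geq\sqrt{m}-\omega(\mathcal{D}(g,\bx))-t$ with probability at least $1-\exp(-t^2/2)$; feeding in $\omega^2(\mathcal{D}(g,\bx))\leq U_g$ and taking $m\geq(U_g+3/2)/\rho$ forces $\lambda_{\min}\geq1-\sqrt{\rho}$ once the lower-order and concentration terms are absorbed into the additive $3/2$, whence $\|\hat{\bx}-\bx\|_2<2\sigma/(1-\sqrt{\rho})$.

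I expect the noisy case to be the main obstacle. The noiseless direction is essentially a citation once~\eqref{upperBound} is available, whereas pinning down the precise constants $3/2$ and $1-\sqrt{\rho}$ requires careful bookkeeping in Gordon's inequality together with the Gaussian concentration of the $1$-Lipschitz map $\bg\mapsto\mathrm{dist}(\bg,(\mathcal{D}(g,\bx))^{\circ})$, with $\rho$ playing the role of the admissible ratio $\omega^2(\mathcal{D}(g,\bx))/m$. The only delicate point is matching these constants; the structural steps are otherwise standard.
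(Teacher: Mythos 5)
Your proposal is correct and takes essentially the same route as the paper: the paper gives no independent proof of this proposition, but obtains it exactly as you do, by assembling the cited facts of \cite{Venkat12} --- the optimality condition \eqref{optimalCondition}, Corollary 3.3 of \cite{Venkat12} for both the exact and noisy regimes, and the chain \eqref{upperBound} replacing $\omega^{2}(\mathcal{D}(g,\bx))$ by the computable $U_{g}$. Your extra unpacking of the noisy case (feasibility argument, restricted minimum gain, Gordon's escape-through-the-mesh with the substitution $\epsilon=1-\sqrt{\rho}$) reproduces the proof internal to \cite{Venkat12} rather than anything in this paper, and is sound up to a harmless normalization convention (for unit-variance $\mathbf{\Phi}$ the restricted gain concentrates near $(1-\sqrt{\rho})\sqrt{m}$ rather than $1-\sqrt{\rho}$, which only makes the stated error bound $\frac{2\sigma}{1-\sqrt{\rho}}$ conservative).
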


\subsection{Some supported results}\label{support}
Recall that the probability density of the normal distribution $\mathcal{N}(0,1)$
with zero-mean and unit variance $\psi(x)$ is given by
\begin{equation}\label{densityDistribution}
\psi(x):=(1/\sqrt{2\pi})e^{-x^{2}/2}.
\end{equation}

We also consider the following inequality \cite{MotaARXIV14}:
\begin{equation}
\label{logInequality}
\frac{(1-x^{-1})}{\sqrt{\pi\log(x)}}\leq\frac{1}{\sqrt{2\pi}}\leq\frac{2}{5},
\end{equation}
for all $x>1$. Moreover, adhering to the formulation in~\cite{MotaARXIV14},
we use the following inequality with $x > 0$ in our derivations:
\begin{equation}
	\label{positiveA}
	\frac{1}{\sqrt{2\pi}}\int_{x}^{\infty}(v-x)^{2}e^{-v^{2}/2}\mathrm{d}v\leq\dfrac{\psi(x)}{x}.
\end{equation}

\bibliographystyle{IEEEtran}
\bibliography{./IEEEfull,./IEEEabrv,./bibliography}

\end{document}